\newcommand{\resp}[1]{\ (resp. #1)}
\newcommand{\ifnv}[2]{\ifthenelse{\equal{#1}{}}{}{#2}}
\newcommand\sett[3][]{\left\{\left.#2\ifnv{#1}{\in #1}\vphantom{#3}\right|#3\right\}}
\newcommand\dc{\mathfrak d_C}
\newcommand\diam\theta
\newcommand{\N}{\mathbb N}
\newcommand{\Z}{\mathbb Z}
\newcommand\F{\mathbb F}
\newcommand{\cc}[2]{\left\llbracket #1,#2\right\rrbracket}
\begin{document}

\title{Structural Properties of Non-Linear Cellular Automata: Permutivity, Surjectivity and Reversibility}
\titlerunning{Structural Properties of Non-Linear Cellular Automata}
\author{Firas Ben Ramdhane\inst{1,3} \and
Alberto Dennunzio\inst{1} \and
Luciano Margara\inst{2} \and
Giuliamaria Menara\inst{1}}
\authorrunning{F. Ben Ramdhane, A. Dennunzio, L. Margara and G. Menara.}
\institute{Department of Informatics, Systems and Communications,\\
University of Milano-Bicocca, Italy\\
\email{\{firas.benramdhane,alberto.dennunzio,giuliamaria.menara\}@unimib.it}  \and
Department of Computer Science and Engineering,\\ University of Bologna, Cesena, Italy\\ \email{luciano.margara@unibo.it} \\ \and
University of Sfax, Faculty of Sciences of Sfax, Tunisia.\\
}
\maketitle

\begin{sloppypar}
\begin{abstract}
    This paper explores the algebraic conditions under which a cellular automaton with a certain class of non-linear local rules exhibits surjectivity and reversibility. We also analyze the role of permutivity as a key factor influencing these properties and provide conditions that determine whether a non-linear CA in such class is (bi)permutive. Through theoretical results and illustrative examples, we characterize the relationships between these fundamental properties, offering new insights into the dynamical behavior of non-linear CA.
\end{abstract}

\section{Introduction}
\label{sec:intro}

A cellular automaton (CA) is a discrete dynamical system in which each cell updates its state according to local rules, allowing complex global behavior to emerge from simple interactions. CA have been widely employed to model intricate phenomena across diverse scientific fields, including physics~\cite{denby1988neural}, biology~\cite{ermentrout1993cellular}, sociology~\cite{hegselmann1996understanding}, and ecology~\cite{hogeweg1988cellular}. Their conceptual simplicity and modeling flexibility have also attracted considerable interest in computer science, particularly in the domain of \emph{cryptography} (see~\cite{manzoni2025combinatorial} for a comprehensive survey of cryptographic applications).

Among the different classes of CA, linear~\cite{dennunzio2020chaos,dennunzio2021efficiently,dennunzio2024efficient} and additive~\cite{dennunzio2024easy,dennunzio2020dynamical} CA have received considerable attention due to their well-understood algebraic structure and predictable behavior. In contrast, non-linear CA remain much less explored, although some attempts have been made to study both qualitatively and quantitatively the characteristics of such CA~\cite{langton1990computation,wuensche1994complexity,wuensche1999classifying}): this lack of characterization presents both a challenge and an opportunity.

From a theoretical perspective, studying non-linear CA is compelling, as their non-linearity introduces a level of dynamical complexity not present in their linear counterparts. This complexity opens new avenues for analysis and classification, and may reveal behaviors that are fundamentally different from those observed in well-studied classes.

In addition, this complexity and unpredictability make non-linear CA promising candidates for applications where such properties are desirable - most notably in cryptography: while linear CA have already been employed in the construction of various cryptographic primitives, the potential of non-linear CA in this domain remains largely untapped.

The goal of this paper is to take a step toward bridging this gap by initiating a deeper theoretical study of non-linear CA, starting from classical results addressing the injectivity and
surjectivity questions.
It is widely acknowledged that characterizing local rules which make a CA injective or surjective proves arduous in the unrestricted case~\cite{kari2000linear} therefore, given the complexity of the issue at hand, we limit our analysis to the class of non-linear $j$-separated CA, i.e. CA with diameter $d$ and local rule $f$ defined as:
\[
f(x_1,\dots,x_{d+1}) = a_j x_j^{q_j} + \pi(x_{1}, \dots, x_{j-1},x_{j+1},\dots,x_{d+1}),
\]
where $a_j \in \Z_{m}^*$, $q_j$ is a non negative integer, and $\pi:\Z_{m}^{d}\to \Z_{m}$ is any map (for more details see Definition~\ref{def:j-separated}).

Exploiting the structural properties of $j$-separated CA, we are able to provide a permutivity characterization in Lemma~\ref{lem:permutive CA iff (n,phi(m)) coprime} and Proposition~\ref{prop:permutive CA iff (p(x),x^p-x) coprime}.
Then, building on these results and restricting to the class of $LR$-separated CA, where the local rule $f$ can be written, for $1\le \ell\le r\le d+1$, as
\[
f(x_1,\dots,x_{d+1}) = a_\ell x_\ell^{q_\ell} + \pi(x_{\ell+1}, \dots, x_{r-1}) + a_r x_r^{q_r},
\]
we prove first in Theorem~\ref{thm:characterization surjective CA} that a $LR$-separated $F$ is surjective if it is either $\ell$-permutive or $r$-permutive, and then in Theorem~\ref{thm:characterization injective/bijective} that $F$ is reversible if and only if it is reversible shift-like.

\begin{theorem*}[cf. Theorems~\ref{thm:characterization surjective CA} and~\ref{thm:characterization injective/bijective}]
    Let $F$ be a $LR$-separated CA over the finite ring $\Z_{m}$, for any integer $m \geq 3$, and let $\ell$ \resp{r} be the leftmost \resp{rightmost} positions of $F$. 
    Then:
    \begin{enumerate}
        \item  If either $\gcd(q_{\ell},\varphi(m))=1$ or $\gcd(q_r,\varphi(m))=1$, then $F$ is surjective.
        \item $F$ is injective if and only if $\ell=r$ and $\gcd(q_{\ell},\varphi(m))=1$.
    \end{enumerate}
\end{theorem*}

Besides theoretical results, we also provide illustrative examples to better clarify the relationships between these fundamental properties, offering new insights into the dynamical behavior of non-linear CA.

\subsection*{Outline}
The paper is organized as follows.
In Section~\ref{sec:background} we recall some relevant algebraic background and notions about the dynamical properties of CA, and we introduce the class of \emph{$j$-separated} non-linear CA, which will be essential in the remainder of this work.
Further, in Section~\ref{sec:permutivity} we proceed by exhibiting algebraic conditions for the local rule $f$ under which the CA is leftmost (or rightmost) permutive.
Finally, in Sections~\ref{sec:surjectivity global rule} and~\ref{sec:reversibility} we present the core results of the paper, providing characterization theorems for surjective and reversible $j$-separated CA.

\subsection*{Acknowledgements}
This work was partially supported by the PRIN 2022 PNRR project ‘‘Cellular Automata Synthesis for Cryptography Applications (CASCA)’’ (P2022MPFRT) funded by the European Union – Next Generation EU, and by the
HORIZON-MSCA-2022-SE-01 project 101131549 ‘‘Application-driven Challenges for Automata Networks and Complex Systems (ACANCOS)’’.

\section{Terminology and Background}
\label{sec:background}

In this section, we give the preliminary definitions and results needed for the rest of the paper. 
For a comprehensive introduction on the theory of CA see~\cite[Section 1]{back2012handbook} and \cite[Chapter 5]{kurka2003topological}.

We start with some terminology from word combinatorics. An \emph{alphabet} 
$A$ is a finite set of symbols, called \emph{letters}. In this paper, we take $A=\Z_m$, the set of integers modulo $m$.
A \emph{finite word} over an alphabet $A$ is a finite sequence of letters from $A$. The length of a finite word $u$, denoted by $|u|$, is the number of letters it contains. The unique word of length $0$ is called the \emph{empty word} and is denoted by $\lambda$.
A \emph{configuration} (or \emph{bi-infinite word}) $x = \ldots x_{-2}x_{-1}x_0x_1x_2 \ldots$ over $A$ is an infinite concatenation of letters from $A$ indexed by $\Z$. For integers $n \le m$, we denote by $x_{\cc{n}{m}} = x_n x_{n+1} \cdots x_{m-1}x_m$ the subword of $x$ from position $n$ to $m$, where $\cc{n}{m} = [n,m] \cap \Z$; further, we will indicate by $u^\infty$ the \emph{constant word}, i.e. the word constructed by concatenating the same letter $u$ infinitely many times.
The set of all finite (resp. bi-infinite) words over $A$ is denoted by $A^*$ (resp. $A^\Z$), and for each $n \in \N$, the set of words of length $n$ is denoted by $A^n$.
Most classically, the set $A^\Z$ is endowed with the product topology of the discrete topology on each copy of $A$. The topology defined on $A^\Z$ is metrizable, corresponding to the \emph{Cantor distance} defined as follows:
$$ \dc(x,y)=2^{-\min\sett{|n|}{x_n\neq y_n,~ n\in\Z}}, \forall x\neq y\in A^\Z, \text{ and } \dc(x,x)=0, \forall x\in A^\Z.$$
This space, called the \emph{Cantor} space, is compact, totally disconnected and perfect. 

This topological framework naturally leads to the definition of a \emph{topological dynamical system}, which provides a formal setting for studying the evolution of configurations under continuous transformations. Recall that a \emph{topological dynamical system} is a pair $(X_d, F)$, where $X_d = (X, d)$ is a compact metric space and $F \colon X \to X$ is a continuous map. When $X$ consists of symbolic configurations, such as elements of $A^\Z$, the system is called a \emph{symbolic dynamical system}.

CA are a classical example of such systems. Formally, a CA is a map $F \colon A^{\mathbb{Z}} \to A^{\mathbb{Z}}$ such that there exist an integer $\rho \ge 0$ and a local rule $f \colon A^{2\rho+1} \to A$ satisfying, for all $x \in A^{\mathbb{Z}}$ and $i\in\Z:$
$F(x)_i = f(x_{\cc{i-\rho}{i+\rho}}).$
We refer to $\rho$ as the \emph{radius} and $d = 2\rho$ as the \emph{diameter} of the CA.
A fundamental example of a CA is the \emph{shift map}, defined on $A^{\mathbb{Z}}$ by $\sigma(x)_i = x_{i+1}$ for all $i \in \mathbb{Z}$. This map plays a central role in the theory of symbolic dynamics, particularly in the characterization of CA. In fact, a classical result by Curtis, Hedlund, and Lyndon~\cite{hedlund1969endomorphisms} states that a function $F \colon A^{\mathbb{Z}} \to A^{\mathbb{Z}}$ is a CA if and only if it is continuous (with respect to the product topology) and commutes with the shift, that is, $F(\sigma(x)) = \sigma(F(x))$ for all $x \in A^{\mathbb{Z}}$.
Another result by Hedlund characterizes surjective CA. Recall that a CA is said to be \emph{surjective} (resp. \emph{injective}) if its global rule $F$ is onto (resp. one-to-one) and \emph{bijective} if $F$ is both onto and one-to-one. 

To state this result, we define the extension of the local rule $f$, denoted by $f^*$, of a CA $F$ with diameter $d$, on $A^*$ as follows: 
$f^*(u)_i=f(u_{\cc{i}{i+d}})$ if $i<|u|-d$ and the empty word otherwise.
\begin{theorem}[\cite{hedlund1969endomorphisms}]\label{theo: hedlund surj}
A CA $F$ with local rule $f$ and diameter $d$ is surjective if and only if for all $u\in A^*\setminus \{\lambda\}$, $\#{f^*}^{-1}(u)=(\#A)^{d}.$ 
\end{theorem}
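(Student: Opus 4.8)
The plan is to split the biconditional and reduce surjectivity to a purely combinatorial statement about finite words. First I would record the standard fact that surjectivity of $F$ is equivalent to every nonempty word having at least one $f^*$-preimage. Indeed, by the theorem of Curtis, Hedlund and Lyndon, $F$ is continuous and commutes with $\sigma$, so $F(A^{\Z})$ is compact, hence closed, and $\sigma$-invariant, i.e.\ a subshift; it equals $A^{\Z}$ exactly when its language is all of $A^*$. A word $u$ lies in that language iff it occurs in some $F(x)$, and this happens iff $u$ admits a finite $f^*$-preimage: any finite preimage can be padded arbitrarily to a configuration whose image contains $u$, and conversely an occurrence of $u$ in $F(x)$ exhibits a finite preimage as the relevant window of $x$. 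Hence $F$ is surjective iff $\#{f^*}^{-1}(u)\ge 1$ for every $u$. Writing $q=\#A$ and $a_n(u)=\#{f^*}^{-1}(u)$ for $u\in A^n$, the direction ``balanced implies surjective'' is then immediate, since $a_n(u)=q^{d}>0$.

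For the converse I would set up the global count $\sum_{u\in A^n}a_n(u)=q^{n+d}$, which holds because $f^*$ sends each word of length $n+d$ to a single word of length $n$; dividing by the $q^n$ words of length $n$ shows the mean fibre size is exactly $q^{d}$, so it suffices to prove that all fibres are equal. The elementary engine is the pair of flow identities $\sum_{a\in A}a_{n+1}(ua)=\sum_{a\in A}a_{n+1}(au)=q\,a_n(u)$, obtained by noting that appending (resp.\ prepending) one symbol alters only the last (resp.\ first) output cell. Evaluating these at extremal words shows that $M_n=\max_u a_n(u)$ is non-decreasing and $m_n=\min_u a_n(u)$ is non-increasing in $n$, and both are squeezed around the mean, $m_n\le q^{d}\le M_n$.

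The main obstacle is to upgrade the one-sided surjectivity bound $a_n(u)\ge 1$ to a uniform upper bound on fibre sizes; once that is in hand the rest is bookkeeping. Here I would invoke a Garden-of-Eden / box-counting mechanism: if some word $u$ had two distinct preimages agreeing on their first $d$ and last $d$ letters, these two ``mutually erasable'' blocks could be substituted independently inside arbitrarily many disjoint copies along $\Z$ without altering the image, forcing the number of length-$N$ words occurring in $F(A^{\Z})$ to fall strictly below $q^{N}$ for large $N$ and hence producing a word with no preimage, contradicting surjectivity. Consequently a preimage of $u$ is determined by its border letters, so $a_n(u)\le q^{2d}$ uniformly in $n$ and $u$. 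I expect this counting step, rather than the algebra, to be where the genuine work lies.

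Finally I would close as follows. Being monotone and bounded, $M_n$ and $m_n$ are eventually constant, say $M_n=M$ and $m_n=m$ for $n\ge n_1$, with $m\le q^{d}\le M$. For such $n$ the flow identity, with each summand lying in $[m,M]$ and the total equal to $q$ times the value at the shorter word, forces the set of words attaining the maximum and the set attaining the minimum each to be closed under one-letter extension on either side; hence any word containing a maximising factor has value $M$, and any word containing a minimising factor has value $m$. The concatenation of a maximising word with a minimising word contains both, so $M=m$, and every fibre of length at least $n_1$ equals the mean $q^{d}$. A downward application of the flow identity then propagates this to all shorter words, giving $a_n(u)=q^{d}$ for every $u$, which is exactly the asserted balance condition.
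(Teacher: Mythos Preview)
The paper does not supply its own proof of this theorem: it is quoted as a classical result of Hedlund \cite{hedlund1969endomorphisms} and used as a black box (notably in Lemma~\ref{lem:CA if surj not quadratic}). So there is nothing in the paper to compare your argument against.

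That said, your proposal is correct and is essentially the classical argument. The reduction of surjectivity to ``every finite word has a preimage'' via compactness and shift-invariance is standard; the flow identities $\sum_a a_{n+1}(ua)=\sum_a a_{n+1}(au)=q\,a_n(u)$ are the right combinatorial backbone; and the Garden-of-Eden step---two preimages sharing the same $d$-letter borders would yield mutually erasable patterns, contradicting surjectivity, hence $a_n(u)\le q^{2d}$---is exactly Hedlund's uniform bound. Your endgame (monotone bounded integer sequences stabilise, closure of extremal sets under one-letter extensions, concatenating a maximiser with a minimiser forces $M=m=q^d$, then propagate down) is clean and correct. One minor point worth making explicit when you write it up: in the border argument you need $n\ge d$ so that the ``first $d$'' and ``last $d$'' letters of a length-$(n+d)$ preimage do not overlap; this is harmless since the downward propagation handles short words anyway.
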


Building on the previously mentioned result, one can determine whether a CA is surjective by analyzing the number of preimages of each finite word. However, this method typically involves a very high computational complexity.
In this paper, we provide a necessary and sufficient condition, based on the local rule, for the global rule of a specific class of non-linear CA to be surjective (resp. injective), while avoiding the high computational complexity of exhaustive preimage analysis.

A well-known result states that every injective CA is also surjective \cite[Corollary 5.27]{kurka2003topological}. As a consequence, a CA is bijective if and only if it is injective. Moreover, the inverse of a bijective CA is itself a CA and thus a CA $F$ is injective if and only if it is \emph{reversible}, i.e, there is a CA $G$ such that $F\circ G=G\circ F=id$, where $id$ is
the identity function, for more details one can see \cite{kari2005reversible}.

A distinct and particularly relevant class of CA also ensures surjectivity: these are the so-called \emph{permutive} CA.
We say that a CA $F$ of diameter $d$ and local rule $f$ is \emph{permutive at position} $i$ (with $1 \le i \le d+1$) if, for every $u \in A^{i-1}$, every $v \in A^{d-i+1}$, and every $b \in A$, there exists a unique $a \in A$ such that $f(uav) = b$. In other words, when all variables except the $i$-th are fixed, the function $f$ acts as a permutation in the $i$-th variable.
In particular, if $i = 1$ (respectively, $i = d+1$), we say that $F$ is \emph{left} (respectively, \emph{right}) permutive. A CA is said to be \emph{bipermutive} if it is both left and right permutive, and simply \emph{permutive} if it satisfies at least one of these conditions.
According to \cite[Proposition 5.22]{kurka2003topological}, every permutive CA is surjective.

We now turn our attention to an algebraic notion and a result which we will rely on in the upcoming results.
Recall that the \emph{Euler's totient function}~\cite{euler2012theoremata}, denoted $\varphi(n)$, is defined as the number of positive integers less than or equal to $n$ that are coprime to $n$. Formally,
\[
\varphi(n)= \# \{k \in \Z \text{ such that } 1 \leq k \leq n \text{ and } \gcd(k,n)=1 \}.
\]
Also, recall that every function from a finite field $\F$ to itself can be represented as a polynomial over $\F$. For completeness, we include a proof of this fact.

\begin{lemma}
\label{prop:any function is poly}
Any function from a finite field $\F$ to $\F$ can be represented as a polynomial over $\F$.
\end{lemma}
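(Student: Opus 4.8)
The plan is to use Lagrange interpolation over the finite field $\F$. Let $q = \#\F$ and enumerate the elements of $\F$ as $c_1, \dots, c_q$. Given an arbitrary function $g \colon \F \to \F$, I want to exhibit an explicit polynomial $P(x)$ with coefficients in $\F$ such that $P(c_i) = g(c_i)$ for every $i$; since a function on a finite set is determined by its values, this proves $g$ is represented by $P$.

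First I would construct, for each fixed element $a \in \F$, an \emph{indicator polynomial} $\delta_a(x) \in \F[x]$ satisfying $\delta_a(a) = 1$ and $\delta_a(c) = 0$ for all $c \in \F \setminus \{a\}$. The natural choice is $\delta_a(x) = \prod_{c \in \F \setminus \{a\}} (x - c)(a - c)^{-1}$; this is well defined because $\F$ is a field, so each factor $a - c$ with $c \neq a$ is a nonzero element and hence invertible. A quick check shows $\delta_a$ has the claimed evaluation behavior. Then I would set
\[
P(x) = \sum_{a \in \F} g(a)\, \delta_a(x),
\]
and verify that for each $c \in \F$ we have $P(c) = \sum_{a \in \F} g(a)\, \delta_a(c) = g(c)$, since every term with $a \neq c$ vanishes and the term $a = c$ contributes $g(c) \cdot 1$. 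Thus $P$ induces the function $g$, completing the argument.

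Alternatively, and perhaps more slickly, one can invoke the fact that the evaluation map from $\F[x]/(x^q - x)$ to the set $\F^\F$ of all functions $\F \to \F$ is a ring homomorphism; both sides have cardinality $q^q$ (using that $x^q = x$ holds identically on $\F$ by Fermat's little theorem / Lagrange's theorem on the multiplicative group), and injectivity of this map follows because a nonzero polynomial of degree less than $q$ cannot have $q$ roots. Hence the map is a bijection, so every function arises from a polynomial. I would likely present the explicit interpolation version since it is self-contained and constructive, mentioning the counting argument only as a remark.

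There is essentially no serious obstacle here: the only point requiring care is ensuring the inverses $(a-c)^{-1}$ exist, which is exactly where the field hypothesis (as opposed to a general ring) is used, and noting that equality of the \emph{functions} induced by $P$ and $g$ is what is asked, not equality of formal polynomials. If one prefers to phrase the result as "there is a canonical representative of degree $< q$," one adds the remark that reducing $P$ modulo $x^q - x$ does not change the induced function, again by Fermat's little theorem.
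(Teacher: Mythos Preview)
Your proof is correct. The paper takes a different but closely related route: it enumerates the elements $h_1,\dots,h_n$ of $\F$, writes an undetermined-coefficients polynomial $a_0+a_1x+\cdots$, and solves the linear system $Va=g$ for the coefficient vector, where $V$ is the Vandermonde matrix on the $h_i$; the nonvanishing of $\det V=\prod_{i<j}(h_i-h_j)$ (using, exactly as you do, that distinct elements have invertible differences in a field) yields a unique solution. Your Lagrange interpolation is essentially the explicit closed-form inverse of that same linear system, so the two arguments are cousins; yours has the advantage of producing the representing polynomial directly and constructively without invoking linear algebra or the Vandermonde determinant formula, while the paper's version packages existence and uniqueness of the low-degree representative in one stroke. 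Your alternative counting argument via the evaluation isomorphism $\F[x]/(x^q-x)\to\F^\F$ is also valid and is arguably the slickest of the three.
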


\begin{proof}
    Let $g$ be any function from $\F$ to $\F$. Since $\F$ is a finite field, then we have a finite list of all the elements of $\F$, label them $h_1, h_2, \dots,h_n$, where $h_i \neq h_j$ for any distinct $i$ and $j$.
    We want to write $g$ as an $n$ degree polynomial $g(x)=a_0 + a_1x+\dots+a_nx^n$ that has the value $g(h_i)$ at $h_i$ for every $i$.
    This is equivalent to solving the following system:
    \[
    \begin{matrix}
        a_0 + a_1(h_1) + a_2(h_1)^2 + \cdots + a_n(h_1)^n = g(h_1) \\
         \dots\dots\dots\dots\dots \\
        a_0 + a_1(h_n) + a_2(h_n)^2 + \cdots + a_n(h_n)^n = g(h_n). \\
    \end{matrix}
    \]

    This linear system can be represented with the help of the Vandermonde matrix, $V$:
    \[
    V=\left[
    \begin{matrix}
        1 &h_1 &h_1^2 &\cdots &h_1^{n-1} \\
        &\dots&\dots&\dots& \\
        1 &h_n &h_n^2 &\cdots &h_n^{n-1} \\
    \end{matrix}
    \right].
    \]
    Now if we let $a = [a_0, a_1,\dots, a_n]^T$ and let $g = [g(h_1), g(h_2),\dots, g(h_n)]^T$, then we can find our coefficients by solving the system $V a = g$ for $a$.
    The question of whether the polynomial exists reduces to the question of whether the determinant of $V$ is non-zero. 
    The determinant of the Vandermonde matrix is $\prod_{i < j}(h_i-h_j)$,~\cite{horn2012matrix}. Since every one of our $h_i$ are distinct, the determinant is non-zero and therefore a unique solution exists.
    So for every function from $\F$ to $\F$, we can find a polynomial over $\F$ that agrees with that function at every point. \qed
\end{proof}

\begin{remark}
\label{rem:any function is poly on Fn}
    If $\F$ is a finite field, it is straightforward to extend Lemma \ref{prop:any function is poly} to the context of a function $f:\F^n \to \F$.
\end{remark}

\begin{remark}
\label{rem:function over finite ring not necessarily poly}
    This result fails to extend to any $\Z_m$ where $m$ is not prime.
    Indeed, consider the Kronecker delta function on $\Z_4$,
    \[
    f: \Z_4 \to \Z_4, \quad f(x) = 
    \begin{cases}
    1 & \text{if } x = 0 \\
    0 & \text{otherwise}.
    \end{cases}
    \]
    Suppose there exists a polynomial $p(x) \in \Z_4[x]$ such that:
    \[
    p(0) = 1,\quad p(1) = p(2) = p(3) = 0
    \]
    Consider the polynomial $q(x) = p(x) - 1$. Then:
    \[
    q(0) = 0,\quad q(1) = q(2) = q(3) = -1 \equiv 3 \pmod{4}.
    \]
Since $q(0) = 0$, $x$ must divide $q(x)$ in $\Z_4[x]$. Thus, we can write:
    \[
    q(x) = x \cdot r(x),
    \]
    for some polynomial $r(x) \in \Z_4[x]$.

    Evaluating $q(x)$ at $x = 2$:
    \[
    q(2) = 2 \cdot r(2) \equiv 3 \pmod{4}.
    \]
    However, $2 \cdot r(2)$ can only be $0$ or $2$ modulo 4, never $3$, so this is a contradiction.
\end{remark}

While we acknowledge, from Remark~\ref{rem:function over finite ring not necessarily poly}, that restricting our focus to polynomial local rules over $\Z_m$ will result in an ultimately incomplete analysis, we choose to start by studying this simpler case, as it will serve as a foundation allowing us to later address the more general scenario.
However, it is worth pointing out that, because of Proposition~\ref{prop:any function is poly} and Remark \ref{rem:any function is poly on Fn}, when $m$ is prime (i.e., in the context of finite fields) the investigation of polynomial local rules amounts to an exhaustive analysis of non-linear CA.
\\

Although we are restricting our focus to polynomial functions, the study remains complex due to the wide range of behaviors these functions exhibit. To manage this complexity, we further narrow our attention to a specific class of non-linear CA defined by a local rule $f$ such that $f$ is a multivariate polynomial with (at least) one variable separated from the others. We end this section by introducing this notion, which we will rely on in the remainder of the paper.

\begin{definition}
\label{def:j-separated}
Let $F$ be a CA over the finite ring $\Z_{m}$ with $m \geq 3$, defined by a local rule $f: \Z_m^{d+1} \to \Z_m$ of the form:
$$f(x_1, \dots, x_{d+1}) = a_j x_j^{q_j} + \pi(x_1,\dots, x_{j-1},x_{j+1},\dots,x_{d+1}),$$
\begin{enumerate}
    \item We say that $F$ is \emph{separated in position $j$}, or simply \emph{$j$-separated}.
    \item If $j = \ell$ (resp.\ $j = r$), where $a_\ell$ (resp.\ $a_r$) is the leftmost (resp.\ rightmost) non-zero coefficient, then $F$ is said to be \emph{leftmost} (resp.\ \emph{rightmost}) separated. 
    \item We say that $F$ is \emph{$LR$-separated} if it is both leftmost and rightmost separated.
    \item We say that $F$ is \emph{totally separated} if the local rule is of the form 
    $$f(x_1,\dots,x_{d+1}) = \sum_{i=1}^{d+1} a_i x_i^{q_i},$$
\end{enumerate}
\end{definition}

\begin{remark}\label{re: def sep}
If $F$ is a $LR$-separated CA with local rule $f$ and diameter $d$, then $f$ necessarily takes one of the following forms:
\begin{enumerate}
 \item $f(x_1,\dots,x_{d+1}) = a_\ell x_\ell^{q_\ell}$, in which case $\ell = r$ and $F$ is said to be \emph{shift-like}.
    \item $f(x_1,\dots,x_{d+1}) = a_\ell x_\ell^{q_\ell} + \pi(x_{\ell+1}, \dots, x_{r-1}) + a_r x_r^{q_r}$,  
    where $1 \le \ell < r \le d+1$, such that \(a_\ell\) (resp.\ \(a_r\)) is the leftmost (resp.\ rightmost) non-zero coefficient, and \(\pi : \Z_m^{r - \ell - 1} \to \Z_m\) is an arbitrary map.
\end{enumerate} 

Notice that in both cases it is possible to write
$f(x_1,\dots,x_{d+1}) = a_\ell x_\ell^{q_\ell} + \pi(x_{\ell+1}, \dots, x_{r-1}) + a_r x_r^{q_r}$ with $\pi:\Z_m^h \to \Z_m$, where $h=\max\{0,r-\ell-1\}$. \\
We will refer to $\ell$ (resp. $r$) as the \emph{leftmost} (resp. \emph{rightmost}) position of $F$.
\end{remark}

It is important to note that this work focuses on the case $A = \Z_m$ with $m \geq 3$, as the case $m = 2$ corresponds to linear CA, which have already been extensively studied in the literature (see, for example, \cite{ito1983linear} and \cite{manzini1999complete}).

\section{Quadratic CA on finite fields}

Among non-linear CA, a particularly notable subclass is that of quadratic CA. We begin by proving that no such automaton can be surjective over a finite field $\Z_p$. Although this result follows from Theorem~\ref{thm:characterization surjective CA}, we include it here explicitly, as the constructive argument provides valuable insight into the structural constraints specific to this class.


\begin{definition}
A CA $F$ with diameter $d$ and local rule $f: \Z_m^{d+1}\to\Z_m$ is quadratic if $f$ is a quadratic form on $\Z_m^{d+1}$ (i.e. $f(au)=a^2f(u)$ for any $u\in \Z_m^{d+1}$ and $a\in \Z_m$, and, the map $(u,v)\mapsto f(u+v)-f(u)-f(v)$ is bilinear form that is linear in each argument separately).
\end{definition}

\begin{lemma}
\label{lem:CA if surj not quadratic}
Let $F$ be a totally separated CA over the finite field $\Z_p$, where $p$ is prime number with $p \geq 3$, i.e. the local rule $f$ is given by
$$f(x_1, \dots, x_{d+1}) = \sum_{i=1}^{d+1} a_i x_i^{q_i},$$
where each $a_i \in \Z_p$.
If every $q_i$ is an even positive integers for all $i\in\cc{1}{d+1}$, then the global map $F$ is not surjective.
\end{lemma}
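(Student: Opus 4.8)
The key observation is that over $\Z_p$ with $p$ odd, the set of squares (quadratic residues together with $0$) has exactly $\frac{p+1}{2}<p$ elements, so any even power $x^{q_i}=(x^{q_i/2})^2$ takes values only in this proper subset $Q=\sett{t^2}{t\in\Z_p}$. I would use Hedlund's surjectivity criterion (Theorem~\ref{theo: hedlund surj}): it suffices to exhibit a single nonempty word $u\in\Z_p^*$ with $\#{f^*}^{-1}(u)\neq p^{d}$, and in fact I will show some word has \emph{no} preimage at all.

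\textbf{Step 1 (range of a single coordinate).} First I would record that for each $i$, since $q_i$ is a positive even integer, $a_ix_i^{q_i}\in a_iQ$ for all $x_i\in\Z_p$, where $a_iQ=\sett{a_it^2}{t\in\Z_p}$ is a translate-free dilate of $Q$ and hence also has at most $\frac{p+1}{2}$ elements (exactly that many if $a_i\neq 0$, and $\{0\}$ if $a_i=0$). Consequently $f(x_1,\dots,x_{d+1})=\sum_i a_ix_i^{q_i}$ is a sum of $d+1$ terms each ranging over a set of size $\le\frac{p+1}{2}$.

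\textbf{Step 2 (counting reachable values of $f$ on a block, via a long word).} The subtle point is that a single value of $f$ can certainly be all of $\Z_p$ (a sum of two quadratic residues can be anything when $p\ge 3$), so I cannot just look at one output symbol. Instead I would apply $f^*$ to a word $u$ of length $N$ (to be chosen large), so that $f^*(u)$ has length $N-d$, and the map $u\mapsto f^*(u)$ sends $\Z_p^N\to \Z_p^{N-d}$. The image is constrained: writing $b_k=f^*(u)_k=\sum_{i=1}^{d+1}a_iu_{k+i-1}^{q_i}$, each input symbol $u_j$ enters at most $d+1$ of these equations, always through $u_j^{q_i}\in Q$. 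So the image of $f^*$ on $\Z_p^N$ is contained in the image of the map $\Phi:Q^{\,?}\to\Z_p^{N-d}$ obtained by first replacing each $u_j$ by an element $s_j$ of the size-$\frac{p+1}{2}$ set of its admissible powers and then taking the (linear) combinations; the total number of distinct tuples $(u_1^{q_1\text{-part}},\dots)$ feeding in is at most $\big(\tfrac{p+1}{2}\big)^{N}$, hence $\#\,\mathrm{Im}(f^*|_{\Z_p^N})\le \big(\tfrac{p+1}{2}\big)^{N}$. Since $\frac{p+1}{2}<p$ for $p\ge 3$, for $N$ large enough $\big(\tfrac{p+1}{2}\big)^{N}<p^{\,N-d}$, so $f^*|_{\Z_p^N}$ is not surjective onto $\Z_p^{N-d}$; pick $u$ in the complement of the image. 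Then $\#{f^*}^{-1}(u)=0\neq p^{d}$, and Theorem~\ref{theo: hedlund surj} gives that $F$ is not surjective. (One must be slightly careful that the admissible-power set for $u_j$ depends on which $q_i$'s it meets, but since every $q_i$ is even, $u_j^{q_i}\in Q$ in all cases, so the uniform bound $\frac{p+1}{2}$ per input symbol stands; the cruder count "$\le(\frac{p+1}{2})^N$ inputs modulo power-collapsing, mapping into $\Z_p^{N-d}$" is all that is needed.)

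\textbf{Main obstacle.} The delicate part is Step 2: making the cardinality bound rigorous without over- or under-counting. The clean way is the pigeonhole/counting argument just sketched — the number of outputs $f^*$ can produce on length-$N$ words is bounded by the number of distinct "power profiles" $(u_1^{e_1},\dots,u_N^{e_N})$ that can arise, which is at most $\big(\tfrac{p+1}{2}\big)^{N}$, and this beats $p^{\,N-d}$ once $N>\,d\cdot\frac{\log p}{\log p-\log\frac{p+1}{2}}$. An alternative, possibly the one the authors use, is a direct constructive argument exhibiting an explicit short forbidden word by exploiting that $a_\ell x_\ell^{q_\ell}+a_r x_r^{q_r}$ at the two extreme coordinates confines certain output patterns; but the counting argument above is robust and avoids case analysis, so that is the route I would take.
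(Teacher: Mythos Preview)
Your argument is correct, but Step~2 should be tightened: it is not enough that $u_j^{q_i}\in Q$ for each $i$ separately --- what you need is that the \emph{entire tuple} $(u_j^{q_1},\dots,u_j^{q_{d+1}})$ takes at most $\tfrac{p+1}{2}$ values as $u_j$ varies. This holds because each $u_j^{q_i}=(u_j^2)^{q_i/2}$ is determined by $u_j^2$ alone, so $f^*\colon\Z_p^N\to\Z_p^{N-d}$ factors through $(u_1,\dots,u_N)\mapsto(u_1^2,\dots,u_N^2)\in Q^N$, giving $\#\mathrm{Im}(f^*|_{\Z_p^N})\le|Q|^N=(\tfrac{p+1}{2})^N$ cleanly. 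With that fix, your pigeonhole argument goes through.

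The paper takes a genuinely different and shorter route. Rather than long words, it applies Hedlund's criterion to a \emph{single} output letter: choosing a nonzero coefficient $a_h$, it counts $\#f^{-1}(a_h)$ by grouping preimages under the sign-flip action $x_i\mapsto -x_i$. Since all $q_i$ are even, this action preserves $f$, and since $f(0,\dots,0)=0\neq a_h$, every preimage has at least one nonzero coordinate and hence a nontrivial orbit of even size; thus $\#f^{-1}(a_h)$ is even, contradicting $\#f^{-1}(a_h)=p^d$ with $p$ odd. Both proofs exploit the same underlying symmetry (even powers are blind to $x\mapsto -x$), but the paper extracts a \emph{parity} obstruction at a single letter, while you extract a \emph{cardinality} obstruction on long words. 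The paper's version is slicker here; yours has the virtue of generalizing immediately to any local rule factoring through a non-surjective map $\Z_p\to S$ with $\#S<p$ in each coordinate, without needing an involution or a parity argument.
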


\begin{sloppypar}
\begin{proof}
    Suppose that such CA $F$ is surjective. Then, there exists $h\in\cc{1}{ d+1}$ such that $a_h\neq 0$. For $j\in\cc{1}{d+1}$, let us denote by $S_j$, the subset of $\cc{0}{\frac{p-1}{2}}^{d+1}$ such that $(x_1,\cdots, x_{d+1})\in S_j$ if and only if $f(x_1,\cdots, x_{d+1})=a_h$ and there is $i_1< i_2 < \cdots < i_j\in\cc{1}{d+1}$ such that $x_{k}=0$ if $k\in\{i_1,\cdots, i_j\}$ and $x_k> 0$ otherwise.
    Note that, $\bigcup_{j=0}^{d}S_j \neq \emptyset$ since $(0,\cdots,0,1,0,\cdots,0)\in S_d$ where $1$ is at position $h$.
    Moreover, it is clear that if $(x_1,\cdots, x_{d+1})$ is a preimage of $a_h$ by $f$, then any change in the signs of $x_i$ still yields a preimage of $a_h$ by $f$ (since all $q_i$ are positive even integers). Hence, if $(x_1,\cdots,x_{d+1})\in S_j$ then we can find exactly $2^{d+1-j}$ different elements of $f^{-1}(a_h)$ by only changing sings of $x_i$.
    Thus, we can deduce that: 
    $$\# f^{-1}(a_h)=\sum_{j=0}^{d} 2^{d-j+1} \# S_j.$$
    Hence, $\# f^{-1}(a_h)$ is an even number. 
    On the other hand, by Theorem \ref{theo: hedlund surj} and the surjectivity of $F$, we obtain $\# f^{-1}(a_h)=p^{d}$ which contradicts the fact that $p^{d}$ is an odd number (as $p$ is an odd prime number). Therefore, we can conclude that $F$ is not surjective. \qed
\end{proof}
\end{sloppypar}

We can specialize Lemma~\ref{lem:CA if surj not quadratic} to the context of quadratic local rules, yielding a corresponding result for quadratic CA.

\begin{corollary}
There is no surjective quadratic CA over $\Z_p$ for any prime $p\ge 3$.
\end{corollary}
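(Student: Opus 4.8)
The plan is to reduce the Corollary directly to Lemma~\ref{lem:CA if surj not quadratic} by showing that any quadratic CA, in the sense of the Definition just above, has a local rule that is in particular totally separated with all exponents equal to $2$. First I would unwind the definition of a quadratic form $f : \Z_p^{d+1} \to \Z_p$ and argue that over a finite field $\Z_p$ with $p \geq 3$, every quadratic form can be written as a polynomial in which each monomial has total degree exactly $2$; that is, $f(x_1,\dots,x_{d+1}) = \sum_{i \le j} c_{ij} x_i x_j$ with no linear or constant terms. The homogeneity condition $f(au) = a^2 f(u)$ kills any constant and linear contributions, and the bilinearity of $(u,v) \mapsto f(u+v) - f(u) - f(v)$ forces the remaining part to be a degree-$2$ homogeneous polynomial.

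The key step is then to diagonalize: since $p$ is odd, $2$ is invertible in $\Z_p$, and the standard theory of quadratic forms over a field of characteristic $\ne 2$ says $f$ is equivalent, via an invertible linear change of variables, to a diagonal form $\sum_{i=1}^{d+1} a_i y_i^2$. Here I must be slightly careful: an arbitrary invertible linear change of variables on the configuration coordinates need not correspond to a CA, so rather than literally changing variables I would instead note that what actually matters for the argument in Lemma~\ref{lem:CA if surj not quadratic} is only the count $\#f^{-1}(a_h)$, which is invariant under any bijection of the domain $\Z_p^{d+1}$. So: pick any $a_h \ne 0$ in the image (which exists if $F$, hence $f$, is surjective — a surjective CA has a surjective local rule by Theorem~\ref{theo: hedlund surj}), apply the sign-flip parity argument of Lemma~\ref{lem:CA if surj not quadratic} after bringing $f$ into diagonal form, conclude $\#f^{-1}(a_h)$ is even, and contradict $\#f^{-1}(a_h) = p^d$ which is odd.

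Alternatively, and perhaps more cleanly, I would simply invoke Lemma~\ref{lem:CA if surj not quadratic} as a black box once the diagonalization is in hand: a diagonal quadratic form is exactly a totally separated local rule with every $q_i = 2$, an even positive integer, so the Lemma applies verbatim and yields non-surjectivity. The only wrinkle is the edge case where the diagonalized form is identically zero (all $a_i = 0$), but then $f$ is constant and certainly not surjective for $p \geq 3$, so the claim holds trivially in that case too.

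The main obstacle I anticipate is the bookkeeping around the change of variables: making precise that diagonalizing the quadratic form does not break the reduction to Lemma~\ref{lem:CA if surj not quadratic}. The Lemma is stated for CA with a totally separated local rule, so strictly one should either (a) verify that the diagonalizing substitution can be chosen to be a coordinate permutation plus scaling — which is false in general — or (b) extract from the proof of Lemma~\ref{lem:CA if surj not quadratic} the genuinely used fact, namely that a sum of even powers over $\Z_p$ has an even number of preimages of any value, and apply that fact to the diagonal form directly. Route (b) is the safe one, and since the Corollary is explicitly billed as a specialization of the Lemma, I expect the intended proof is just the one-line observation that a quadratic form over $\Z_p$ ($p$ odd) is, up to linear equivalence, a totally separated rule with all exponents $2$, hence non-surjective.
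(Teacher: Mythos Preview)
Your proposal is correct and follows essentially the same approach as the paper: diagonalize the quadratic form over $\Z_p$ via the classical theory (the paper cites Serre) and then invoke Lemma~\ref{lem:CA if surj not quadratic}. In fact you are more careful than the paper, which simply asserts that $f$ ``can be expressed as'' $\sum a_i x_i^2$ and applies the Lemma directly without discussing the change-of-variables issue you flag; your route~(b) is exactly the honest justification that makes the paper's one-line application of the Lemma legitimate.
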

\begin{proof}
   Let $F$ be a CA with diameter $d$ and quadratic local rule $f$. Using the classical decomposition of quadratic forms into sums of squares (see for example~\cite[Chapter 4]{serre2012course}), $f$ can be expressed as: $$f(x_1, x_2, \cdots, x_{d+1})=a_1 x_1^2+a_2 x_2^2+\cdots + a_{d+1}x_{d+1}^2,$$ 
   where $a_i\in\Z_p$ for all $i\in\{1,\cdots, d+1\}$.
    Therefore, by Lemma~\ref{lem:CA if surj not quadratic}, we conclude that $F$ is not surjective. \qed
\end{proof}  
   
\begin{corollary}
\label{cor:qi all even then not injective} 
    Let $F$ be a totally separated CA over $\Z_p$ for any prime $p\ge 3$.\\
    If the powers $q_i$'s are all even positive integers, then $F$ is not injective.
\end{corollary}

\begin{proof}
    Suppose all the $q_i$'s are all even positive integers. Then thanks to Lemma~\ref{lem:CA if surj not quadratic}, $F$ could not be surjective and thus it is not injective (since any injective CA is surjective). \qed
\end{proof}

\begin{remark}
It follows from Lemma ~\ref{lem:CA if surj not quadratic}, together with \cite[Theorems 5.49 and 5.50]{kurka2003topological}, \cite[Proposition 5.41]{kurka2003topological} and \cite{codenotti1996transitive}, that CA satisfying the hypotheses of Lemma~\ref{lem:CA if surj not quadratic} are not open, closing, positively expansive or transitive, where the last two properties that are typically linked to chaotic behavior in symbolic dynamical systems. For precise definitions, see \cite[Definitions 2.39 and 1.15]{kurka2003topological}.
\end{remark}

\section{Permutivity}
\label{sec:permutivity}

In this section, we focus on the study of the permutivity property of non-linear $j$-separated CA.

\begin{lemma}\label{lem:permutive CA iff (n,phi(m)) coprime} 
    Take a finite ring $\Z_{m}$, where $m$ is a positive integer.
    Let $F$ be a $j$-separated CA of diameter $d$, so the local rule $f$ can be written as
    \[
    f(x_1, \dots, x_{d+1}) = ax_j^{n} + g(x_1,\dots, x_{j-1},x_{j+1},\dots,x_{d+1}),
    \]    
    where $a \in \Z_{m}$ is invertible and $g:\Z_{m}^{d} \to \Z_{m}$ is any map. 
    
    Then $F$ is permutive in position $j$ if and only if $\gcd(n,\varphi(m))=1$, where $\varphi(m)$ is the Euler's totient function of $m$.
\end{lemma}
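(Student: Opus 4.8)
The plan is to strip away everything irrelevant to permutivity until only a statement about the power map $P_{n}\colon\Z_{m}\to\Z_{m}$, $x\mapsto x^{n}$, is left, and then to settle that statement with elementary number theory.

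\emph{Step 1 (reduction to a power map).} Fix $u\in\Z_{m}^{j-1}$ and $v\in\Z_{m}^{d-j+1}$, and let $c\in\Z_{m}$ be the value of $g$ at $(u,v)$, which is constant once $u,v$ are fixed. By the definition of permutivity, $F$ is permutive in position $j$ iff for all such $u,v$ and all $b\in\Z_{m}$ the equation $a x^{n}+c=b$ has a unique solution $x$; since $a$ is invertible this is the equation $x^{n}=a^{-1}(b-c)$. For fixed $u,v$ the right-hand side $a^{-1}(b-c)$ runs over all of $\Z_{m}$ as $b$ does, and the resulting condition no longer mentions $u,v$. Hence $F$ is permutive in position $j$ if and only if $P_{n}$ is a bijection of $\Z_{m}$. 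In particular this disposes of the degenerate case $n=0$, where $P_{0}$ is constant and $\gcd(0,\varphi(m))=\varphi(m)>1$ for $m\ge 3$.

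\emph{Step 2 (bijectivity of $P_{n}$).} Write $m=\prod_{i}p_{i}^{k_{i}}$. By the Chinese Remainder Theorem $P_{n}$ is a bijection of $\Z_{m}$ iff it is one of each $\Z_{p_{i}^{k_{i}}}$, and since $\varphi$ is multiplicative it suffices to prove, for a single prime power $\Z_{p^{k}}$, that $P_{n}$ bijects $\Z_{p^{k}}$ iff $\gcd(n,\varphi(p^{k}))=1$. For ``only if'' I argue by contraposition: if a prime $\ell$ divides both $n$ and $\varphi(p^{k})=p^{k-1}(p-1)$, then either $\ell\mid p-1$, so the cyclic group $\Z_{p^{k}}^{*}$ has an element $\omega\ne 1$ of order $\ell$ with $\omega^{n}=1=1^{n}$, or $\ell=p$ with $k\ge 2$, and then $(1+p^{k-1})^{n}\equiv 1+np^{k-1}\equiv 1\equiv 1^{n}\pmod{p^{k}}$ by the binomial theorem while $1+p^{k-1}\ne 1$; either way $P_{n}$ is not injective. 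For ``if'', pick $n'$ with $nn'\equiv 1\pmod{\varphi(p^{k})}$ by Bézout; Euler's theorem gives $P_{n'}\circ P_{n}=\id$ on $\Z_{p^{k}}^{*}$, so $P_{n}$ permutes the units, and it remains to analyze $P_{n}$ on the non-units, after which a cardinality count concludes that $P_{n}$ is a bijection.

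\emph{Expected main obstacle.} The only nonroutine input is Step 1; once permutivity has been recast as bijectivity of $P_{n}$, the remaining ingredients (CRT, cyclicity of $\Z_{p}^{*}$, Euler's theorem) are classical. The point that genuinely requires care is the ``if'' direction of Step 2 on a prime power $\Z_{p^{k}}$ with $k\ge 2$: one must control $P_{n}$ on the \emph{zero-divisors} of $\Z_{p^{k}}$, not merely on its units, because CA permutivity is a requirement over all of $\Z_{m}$. I would isolate this prime-power statement as a separate lemma and settle it first, since that is where the argument is least automatic.
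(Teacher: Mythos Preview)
Your reduction in Step~1 and the CRT decomposition in Step~2 match the paper's proof exactly: the paper simply invokes the Chinese Remainder Theorem and then cites as ``well-known'' (from Dummit--Foote) the prime-power statement that $x\mapsto ax^{n}+c$ bijects $\Z_{p^{e}}$ iff $a$ is a unit and $\gcd(n,\varphi(p^{e}))=1$. You are attempting to prove by hand what the paper merely cites, so the two routes coincide.

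The step you flag as the ``expected main obstacle''---controlling $P_{n}$ on the zero-divisors of $\Z_{p^{k}}$ for $k\ge 2$---is in fact insurmountable, because the lemma as stated is false for non-squarefree $m$. Take $m=8$, $a=1$, $n=3$: then $\gcd(3,\varphi(8))=\gcd(3,4)=1$, yet in $\Z_{8}$ one computes $0^{3}=2^{3}=4^{3}=6^{3}=0$, so $P_{3}$ is not injective and no CA with local rule $x_{j}^{3}+g(\dots)$ over $\Z_{8}$ is permutive at position $j$. More generally, in any $\Z_{p^{k}}$ with $k\ge 2$ the element $p$ is a nonzero nilpotent with $p^{n}=0=0^{n}$ as soon as $n\ge k$, while the condition $\gcd(n,\varphi(p^{k}))=1$ imposes no upper bound on $n$; so counterexamples are abundant. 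Thus the ``well-known fact'' the paper relies on is incorrect over $\Z_{p^{e}}$ with $e\ge 2$, and the ``if'' direction of the lemma survives only under the extra hypothesis that $m$ be squarefree. Your ``only if'' argument is essentially sound (replace the claim that $\Z_{p^{k}}^{*}$ is cyclic---which fails for $p=2$, $k\ge 3$---by an appeal to Cauchy's theorem), and your instinct that the non-units are where the real work lies was exactly right: that is precisely where the statement breaks.
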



\begin{remark}
\sloppy
    In particular, if $F$ is $(d+1)$-separated [resp. $1$-separated] then $F$ is right-permutive [resp. left-permutive] if and only if $\gcd(n,\varphi(m))=1$.
\end{remark}

\begin{proof}
The proof of the Lemma is a consequence the Chinese Remainder Theorem and the well-known fact~\cite{dummit2004abstract} that the polynomial $h:\Z_{p^e} \to \Z_{p^e}$, where $p$ is a prime number and $e$ is a positive integer, defined as
        \[
        h(x)=ax^n + g(x_1,\dots,x_{d})
        \]
is bijective if and only if $a \in \Z_{p^e}$ is invertible and $\gcd(n,\varphi(p^e))=1$. \qed
\end{proof}

\begin{remark}
    \label{rem:permutive CA iff (n,p-1) coprime}
    Notice that it follows from Lemma~\ref{lem:permutive CA iff (n,phi(m)) coprime} that is $m$ is a prime number and $\Z_m$ is the finite field with $m$ elements, then $F$ is permutive in position $j$ if and only if $\gcd(n,m-1)=1$, since $\varphi(m)=m-1$ for $m$ prime.
\end{remark}

It was shown by Hermite
in~\cite{hermite1863sur} that a polynomial $f$ over a finite field $\F_p$ is invertible if and only if $f$ has exactly one root in $\F_p$ and for each integer $t$ with $1 < t < p-2$, $t \not\equiv 0 \mod p$, the reduction of $[f(x)]^t \mod (x^p-x)$ has degree less than $p-2$.
Therefore, a CA over $\Z_p$ with local rule $f(x_1,\dots,x_{d+1})=\pi(x_{d+1}) + g(x_1,\dots,x_d)$ [resp. $f(x_1,\dots,x_{d+1})=\pi(x_1) + g(x_2,\dots,x_{d+1})$] is right-permutive [resp. left-permutive] if and only if the two aforementioned conditions hold for the polynomial $\pi(x)$.

Hermite's criterion can be simplified in the context of the finite field on $p$ elements $\Z_p$ \cite{lidl1997finite}, where it holds (via combinatorial arguments on polynomial exponents) that a polynomial $f \in \Z_p[x]$ is invertible on $\Z_p$ if and only if $\gcd(f'(x),x^p-x)=1$, where $f'(x)$ is the first derivative of $f(x)$, and $x^p - x$ is the polynomial whose roots are all elements of $\Z_p$. 
We thus have the following characterization of permutive CA over the finite field $\Z_p$.

\begin{proposition}
\label{prop:permutive CA iff (p(x),x^p-x) coprime}
Let $F$ be a CA over the finite field $\Z_p$ with diameter $d$ defined by the local rule $$f(x_1,\dots,x_{d+1})=\pi(x_j) + g(x_1,\dots,x_{j-1},x_{j+1},\dots,d_{d+1}),$$ where $\pi(x) \in \Z_p[x]$ is a polynomial and $g$ is any map $g:\Z_p^{d} \to \Z_p$. 
Then $F$ is permutive in position $j$ if and only if $\deg(\pi)<p$ and $\gcd(\pi'(x), x^p - x) = 1$.
\end{proposition}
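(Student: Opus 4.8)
The plan is to reduce the permutivity statement to a purely single-variable polynomial question and then invoke the two classical criteria already recalled in the text. First I would observe, exactly as in the discussion preceding Lemma~\ref{lem:permutive CA iff (n,phi(m)) coprime}, that $F$ is permutive in position $j$ if and only if, for every fixed choice of the remaining $d$ variables, the map $x_j \mapsto f(\ldots,x_j,\ldots) = \pi(x_j) + c$ is a bijection of $\Z_p$, where $c = g(x_1,\dots,x_{j-1},x_{j+1},\dots,x_{d+1})$ is a constant. Since adding a constant is a bijection of $\Z_p$, this holds for all choices of $c$ if and only if it holds for $c = 0$; hence $F$ is permutive in position $j$ if and only if $\pi \colon \Z_p \to \Z_p$ is a permutation polynomial of the field $\Z_p$.

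Next I would handle the degree condition. By Lemma~\ref{prop:any function is poly} (or rather the reduction modulo $x^p - x$), every function on $\Z_p$ is represented by a unique polynomial of degree at most $p-1$, and the only polynomials of degree exactly $p-1$ that could be permutations must be ruled out: a standard fact is that no permutation polynomial of $\Z_p$ has degree $p-1$ (the sum $\sum_{a \in \Z_p} a^{p-1} = p-1 \not\equiv 0$, whereas for any permutation the values run over all of $\Z_p$ and $\sum_{a} a = 0$ when $p$ is odd; the $p=2$ case is excluded since $m \geq 3$). So if $\deg(\pi) \geq p$, we reduce $\pi$ modulo $x^p - x$ without changing the induced function, and if the reduced polynomial has degree $p-1$ it is not a permutation. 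This justifies assuming $\deg(\pi) < p-1$, equivalently incorporating "$\deg(\pi) < p$" into the equivalence together with the gcd condition.

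Finally I would apply the simplified Hermite criterion stated just before the proposition: a polynomial $f \in \Z_p[x]$ is a permutation polynomial of $\Z_p$ if and only if $\gcd(f'(x), x^p - x) = 1$. Applying this with $f = \pi$ gives that $\pi$ is a permutation polynomial precisely when $\gcd(\pi'(x), x^p - x) = 1$, and combining with the degree normalization from the previous step yields the claimed equivalence: $F$ is permutive in position $j$ $\iff$ $\pi$ is a permutation polynomial of $\Z_p$ $\iff$ $\deg(\pi) < p$ and $\gcd(\pi'(x), x^p - x) = 1$.

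The main obstacle, such as it is, lies in the bookkeeping around the degree hypothesis: one must be careful that the gcd condition $\gcd(\pi'(x), x^p-x)=1$ is being tested on $\pi$ itself, not on its reduction mod $x^p - x$, so the statement genuinely needs the extra clause $\deg(\pi) < p$ to exclude, e.g., polynomials like $x^p$ whose derivative is $0$ in characteristic $p$ yet which induce the identity map. Everything else is a direct citation of the Hermite/Lidl--Niederreiter criterion and the elementary reduction to one variable; no substantial new computation is required.
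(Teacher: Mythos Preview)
Your approach is exactly the paper's: reduce permutivity in position $j$ to the single-variable question of whether $\pi$ is a permutation polynomial of $\Z_p$, then invoke the derivative criterion quoted just before the proposition. Your handling of the degree normalization is more careful than the paper's one-line remark, and you correctly flag the $x^p$ obstruction to the ``only if'' direction.

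There is, however, a genuine gap that affects both your argument and the paper's: the claimed ``simplified Hermite criterion'' --- that $\gcd(\pi'(x), x^p-x)=1$ is equivalent to $\pi$ being a permutation polynomial --- is false in both directions. Over $\Z_5$, the polynomial $\pi(x)=x^3$ is a permutation (since $\gcd(3,4)=1$), yet $\pi'(x)=3x^2$ vanishes at $0$, so $\gcd(3x^2,\,x^5-x)=x\neq 1$. Conversely, $\pi(x)=x^3+x$ over $\Z_5$ satisfies $\pi(0)=\pi(2)=0$ and hence is not a permutation, yet $\pi'(x)=3x^2+1$ has no root in $\Z_5$, so $\gcd(\pi',x^5-x)=1$. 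The condition $\gcd(\pi',x^p-x)=1$ merely says that $\pi$ has no critical point in $\Z_p$, which in positive characteristic is neither necessary nor sufficient for bijectivity. So the final citation step does not go through as stated; the genuine Hermite criterion concerns the degrees of the reductions of $\pi(x)^t \bmod (x^p-x)$ for $1\le t\le p-2$, not the formal derivative.
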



\begin{proof}
For $x_1,\cdots, x_d\in \Z_p$, let $h:\Z_p \to \Z_p$ be the map defined as
    \[
    h(x)=\pi(x) + g(x_1,\dots,x_d).
    \]
The condition that $\deg(\pi)<p$ ensures $\pi$ is not degenerate modulo $p$ (by Fermat's Little Theorem, $x^p \equiv x \mod p$).
Further, $\gcd(\pi'(x), x^p - x) = 1$ ensures $\pi$ has no repeated roots in $\Z_p$ and is an invertible polynomial (by Hermite's criterion for invertible polynomials over finite fields \cite{hermite1863sur,lidl1997finite}).

		
\end{proof}


\section{Surjectivity}
\label{sec:surjectivity global rule}

We now provide some alternative characterization results on surjectivity for the class of $LR$-separated CA.
We start by recalling some useful facts from~\cite{mullen2013handbook}.

\begin{definition}[{\cite[Def. 8.2.1]{mullen2013handbook}}]
    Let $\F_p$ be the finite field with $p$ elements. 
    A polynomial $f \in \F_p[x_1,...,x_n]$ is a permutation polynomial in $n$ variables over $\F_p$ if the equation $f(x_1,...,x_n) = \alpha$ has exactly $p^{n-1}$ solutions in $\F^n_p$ for each $\alpha \in \F_p$.
\end{definition}

\begin{theorem}[{\cite[Theorem 8.2.9]{mullen2013handbook}}]
Let $f \in \F_p[x_1,...,x_n]$ be of the form
\[
f(x_1,...,x_n)=g(x_1,...,x_m)+h(x_{m+1},...,x_n), \hspace{.5cm} 1\leq m <n.
\]
If at least one of $g$ and $h$ is a permutation polynomial over $\F_p$ then $f$ is a permutation
polynomial over $\F_p$. If $p$ is prime, then the converse also holds.  
\end{theorem}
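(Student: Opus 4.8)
The plan is to prove the two implications separately. The ``if'' direction is an elementary fibering count that does not use primality of $p$; the converse needs one genuinely algebraic input, and that is exactly where primality enters.

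For the forward implication, assume without loss of generality (by the symmetry of the two blocks) that $g$ is a permutation polynomial in $m$ variables, i.e. $\#\,g^{-1}(\beta)=p^{m-1}$ for every $\beta\in\F_p$. Fix $\alpha\in\F_p$ and partition $f^{-1}(\alpha)\subseteq\F_p^n$ according to the value of the tail block $(x_{m+1},\dots,x_n)$. For each fixed $c\in\F_p^{\,n-m}$, a tuple $(x_1,\dots,x_m)$ completes $c$ to an element of $f^{-1}(\alpha)$ if and only if $g(x_1,\dots,x_m)=\alpha-h(c)$, and there are exactly $p^{m-1}$ such tuples. Summing over the $p^{\,n-m}$ choices of $c$ gives $\#\,f^{-1}(\alpha)=p^{\,n-m}\cdot p^{m-1}=p^{n-1}$ for every $\alpha$, so $f$ is a permutation polynomial.

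For the converse I would encode value distributions in a group ring. To a map $\phi:\F_p^k\to\F_p$ associate $A_\phi=\sum_{x\in\F_p^k}T^{\phi(x)}\in\Z[T]/(T^p-1)$, so that the coefficient of $T^\beta$ in $A_\phi$ equals $\#\,\phi^{-1}(\beta)$. Two facts drive the argument: first, since $g$ and $h$ act on disjoint blocks of variables, $A_f=A_g\cdot A_h$ in $\Z[T]/(T^p-1)$; second, $\phi$ is a permutation polynomial if and only if $A_\phi=p^{k-1}(1+T+\cdots+T^{p-1})$. Now factor $T^p-1=(T-1)\,\Phi_p(T)$, where $\Phi_p$ is the $p$-th cyclotomic polynomial, which is irreducible over $\mathbb{Q}$ precisely because $p$ is prime; the Chinese Remainder Theorem embeds $\Z[T]/(T^p-1)$ into $\Z\times\Z[\zeta_p]$ via $T\mapsto(1,\zeta_p)$. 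Under the second coordinate the ``permutation'' element $1+T+\cdots+T^{p-1}$ maps to $\sum_{j=0}^{p-1}\zeta_p^{\,j}=0$, and conversely $A_\phi$ has vanishing $\Z[\zeta_p]$-component exactly when all the coefficients $\#\,\phi^{-1}(\beta)$ are equal, hence (since they sum to $p^k$) equal to $p^{k-1}$; thus $\phi$ is a permutation polynomial iff the image of $A_\phi$ in $\Z[\zeta_p]$ is $0$. Applying this projection to $A_f=A_gA_h$: if $f$ is a permutation polynomial, then the product of the images of $A_g$ and $A_h$ in $\Z[\zeta_p]$ is $0$, and since $\Z[\zeta_p]$ is an integral domain (an order in the number field $\mathbb{Q}(\zeta_p)$) one of the two images vanishes, i.e. $g$ or $h$ is a permutation polynomial.

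The main obstacle is the converse, and within it the only step that is not bookkeeping is the appeal to the integral-domain property of $\Z[\zeta_p]$ --- equivalently, the irreducibility of $\Phi_p$ --- which is where primality of $p$ is essential and where the statement genuinely fails over $\F_q$ with $q$ composite, since the additive group ring $\Z[\F_q]$ then has zero divisors. An equivalent alternative for the converse is the additive-character proof: $f$ is a permutation polynomial iff $\sum_{x}\psi(f(x))=0$ for every nontrivial additive character $\psi$ of $\F_p$, the sum factors as $\big(\sum_x\psi(g(x))\big)\big(\sum_y\psi(h(y))\big)$, and over the prime field the sums attached to the various nontrivial characters $\psi_a$ are Galois conjugates over $\mathbb{Q}$, hence either all zero or all nonzero, giving the same dichotomy. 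I would present whichever of the two setups is shorter to state cleanly.
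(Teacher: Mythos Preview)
The paper does not supply its own proof of this theorem: it is quoted verbatim as \cite[Theorem~8.2.9]{mullen2013handbook} and used as a black box to derive Proposition~\ref{prop:surjective if PP or non-PP}. So there is no in-paper argument to compare against.

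That said, your proof is correct and is essentially the standard one found in the literature (and, in particular, in the handbook the paper cites). The forward direction is the obvious fibering count. For the converse, your group-ring formulation and the additive-character formulation you mention as an alternative are really the same argument in two notations: writing $A_\phi=\sum_x T^{\phi(x)}$ and projecting $T\mapsto\zeta_p$ is exactly computing the nontrivial additive character sum $\sum_x\psi(\phi(x))$. The key step---that $\Z[\zeta_p]$ is an integral domain, equivalently that $\Phi_p$ is irreducible over $\mathbb{Q}$---is precisely what forces one factor of $A_gA_h$ to vanish, and you correctly identify this as the place where primality of $p$ enters and where the statement breaks for proper prime powers $q$, since then the nontrivial additive characters of $\F_q$ are no longer a single Galois orbit over $\mathbb{Q}$.

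One small wording issue: your phrase ``fails over $\F_q$ with $q$ composite'' should be ``with $q$ a proper prime power,'' since $\F_q$ is only defined for prime powers; the relevant failure is when $q=p^r$ with $r>1$.
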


The following is a direct consequence of the results above.

\begin{proposition}
\label{prop:surjective if PP or non-PP}
    Let $F$ be a $LR$-separated CA with local rule $f$ over $\Z_p$, for any prime $p\ge 3$ and let $\ell$ \resp{$r$} be the leftmost \resp{rightmost} position of $F$.
    \begin{enumerate}
        \item If the polynomial $\pi$ (defined as in Remark \ref{re: def sep}) is any non-permutation polynomial, then $F$ is surjective if and only if $\gcd(q_{\ell},p-1)=1$ or $\gcd(q_r,p-1)=1$.
        \item If $F$ is a totally separated surjective CA, then there is at least one $j \in \cc{\ell}{r}$ such that $\gcd(q_j,p-1)=1$.
    \end{enumerate}
   
\end{proposition}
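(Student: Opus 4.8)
The plan is to deduce both statements from the multivariate permutation-polynomial theorem quoted above (Theorem 8.2.9 of~\cite{mullen2013handbook}), combined with the one-variable permutivity characterization of Remark~\ref{rem:permutive CA iff (n,p-1) coprime} (i.e.\ $x \mapsto a x^q$ is a permutation of $\Z_p$ iff $a \in \Z_p^*$ and $\gcd(q,p-1)=1$), and the elementary fact that, for a CA over a finite alphabet, surjectivity of the global map $F$ is equivalent to the local rule $f$ being a permutation polynomial in the sense of~\cite[Def. 8.2.1]{mullen2013handbook} — this last equivalence follows from Hedlund's Theorem~\ref{theo: hedlund surj} applied to one-letter words $u$, since $\#\,{f^*}^{-1}(u) = p^{d}$ for all $u \in \Z_p$ is exactly the statement that every fibre of $f:\Z_p^{d+1}\to\Z_p$ has size $p^{d}$.

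For part~(1), write $f$ in the $LR$-separated form of Remark~\ref{re: def sep}, namely $f(x_1,\dots,x_{d+1}) = a_\ell x_\ell^{q_\ell} + \pi(x_{\ell+1},\dots,x_{r-1}) + a_r x_r^{q_r}$, and group it as $g(x_\ell) + \big(\pi(x_{\ell+1},\dots,x_{r-1}) + h(x_r)\big)$ where $g(x)=a_\ell x^{q_\ell}$ and $h(x)=a_r x^{q_r}$; iterating Theorem 8.2.9 (which applies since $p$ is prime, giving the "if and only if" direction) shows that $f$ is a permutation polynomial iff at least one of the three blocks $g$, $\pi$, $h$ is a permutation polynomial. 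Since $\pi$ is assumed to be a non-permutation polynomial, $f$ is a permutation polynomial iff $g$ or $h$ is, which by Remark~\ref{rem:permutive CA iff (n,p-1) coprime} happens iff $\gcd(q_\ell,p-1)=1$ or $\gcd(q_r,p-1)=1$ (the coefficients $a_\ell,a_r$ being nonzero, hence invertible, by definition of leftmost/rightmost position). Combining with the surjectivity$\iff$permutation-polynomial equivalence above gives the claim. A minor edge case is the shift-like form $f = a_\ell x_\ell^{q_\ell}$ (when $\ell=r$ and $\pi$ is the empty map): here $F$ is always surjective and $\gcd(q_\ell,p-1)=1$ need not hold, but this case is excluded by the hypothesis that $\pi$ is a non-permutation polynomial, since the empty-argument constant map is vacuously a permutation polynomial — I would note this explicitly so the statement is unambiguous.

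For part~(2), suppose $F$ is totally separated, so $f = \sum_{i=1}^{d+1} a_i x_i^{q_i}$, and surjective, hence $f$ is a permutation polynomial. Applying Theorem 8.2.9 repeatedly to peel off one summand at a time (again using the "only if" direction valid for prime $p$), surjectivity forces at least one summand $a_j x_j^{q_j}$ with $a_j \neq 0$ to be a permutation polynomial of $\Z_p$; restricting to $j \in \cc{\ell}{r}$ is harmless since the summands with index outside $\cc{\ell}{r}$ have zero coefficient and contribute nothing. By Remark~\ref{rem:permutive CA iff (n,p-1) coprime}, $a_j x_j^{q_j}$ being a permutation polynomial with $a_j \in \Z_p^*$ is equivalent to $\gcd(q_j,p-1)=1$, yielding the desired $j$.

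The routine parts are the two equivalences (surjectivity $\iff$ permutation polynomial, and monomial permutation $\iff$ $\gcd$ condition), both of which are already in hand from the excerpt; the only real content is the bookkeeping of the inductive application of Theorem 8.2.9 and the careful handling of the degenerate shift-like case in part~(1). I expect the main obstacle to be stating precisely what "non-permutation polynomial" means for $\pi$ when $r-\ell-1 = 0$ (empty argument list) or $=1$ (single variable), so that the iterated grouping in Theorem 8.2.9 is legitimate and the biconditional in part~(1) is correct; aside from that, the argument is a direct assembly of cited results.
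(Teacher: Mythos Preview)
Your overall strategy---deduce both parts from Theorem~8.2.9 together with the one-variable monomial criterion of Remark~\ref{rem:permutive CA iff (n,p-1) coprime}---is exactly what the paper intends when it says ``direct consequence of the results above.'' However, there is a genuine gap in how you glue the pieces together. You assert as an ``elementary fact'' that surjectivity of $F$ is \emph{equivalent} to $f$ being a permutation polynomial in the sense of~\cite[Def.~8.2.1]{mullen2013handbook}, and you justify it by restricting Hedlund's Theorem~\ref{theo: hedlund surj} to one-letter words. That restriction yields only the implication $F$ surjective $\Rightarrow$ $f$ balanced; the converse is false. The majority rule $f(a,b,c)=ab+bc+ca$ over $\Z_2$ is balanced (each value has $4=2^{2}$ preimages), yet the word $01$ has only two $f^*$-preimages, so the associated CA is not surjective; analogous balanced, non-surjective rules exist over every $\Z_p$. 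Hence ``$f$ is a permutation polynomial $\Rightarrow$ $F$ surjective'' is not available, and you use precisely this false direction to get the $(\Leftarrow)$ half of part~(1).

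The repair is immediate and is what the paper actually does (compare the proof of Theorem~\ref{thm:characterization surjective CA}): if $\gcd(q_\ell,p-1)=1$ (resp.\ $\gcd(q_r,p-1)=1$), then by Remark~\ref{rem:permutive CA iff (n,p-1) coprime} the map $x\mapsto a_\ell x^{q_\ell}$ (resp.\ $x\mapsto a_r x^{q_r}$) is a bijection of $\Z_p$, so $F$ is $\ell$-permutive (resp.\ $r$-permutive) and therefore surjective by the standard fact that permutive CA are surjective. Your treatment of the $(\Rightarrow)$ direction of part~(1) and all of part~(2) is fine, since there you only use the valid implication surjective $\Rightarrow$ balanced, followed by the iterated splitting via Theorem~8.2.9. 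One minor additional slip: in your edge-case remark you say that the shift-like CA with $f=a_\ell x_\ell^{q_\ell}$ is ``always surjective''; in fact it is surjective exactly when $\gcd(q_\ell,p-1)=1$, since $F$ is then a shift composed with the pointwise map $x\mapsto a_\ell x^{q_\ell}$.
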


The following result provides a complete characterization of surjective $LR$-separated CA. 

\begin{theorem}
\label{thm:characterization surjective CA}
    Let $F$ be a $LR$-separated CA over the finite ring $\Z_{m}$, for any integer $m \geq 3$, and let $\ell$ \resp{$r$} be the leftmost \resp{rightmost} position of $F$. 
     If either $\gcd(q_{\ell},\varphi(m))=1$ or $\gcd(q_r,\varphi(m))=1$, then $F$ is surjective.
\end{theorem}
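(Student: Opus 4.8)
The plan is to turn the arithmetic hypothesis into a permutivity statement via Lemma~\ref{lem:permutive CA iff (n,phi(m)) coprime}, reduce to an \emph{extremal} coordinate using the fact that $a_\ell$ (resp.\ $a_r$) is the leftmost (resp.\ rightmost) non-zero coefficient, and then invoke the classical result that every left- or right-permutive cellular automaton is surjective (\cite[Proposition 5.22]{kurka2003topological}). Assume first that $\gcd(q_\ell,\varphi(m))=1$; the case $\gcd(q_r,\varphi(m))=1$ is entirely symmetric (exchange left and right, $\ell$ and $r$). Writing $f$ in the $LR$-separated form of Remark~\ref{re: def sep}, we have $f(x_1,\dots,x_{d+1})=a_\ell x_\ell^{q_\ell}+g(x_1,\dots,x_{\ell-1},x_{\ell+1},\dots,x_{d+1})$ with $a_\ell\in\Z_m^{*}$ and $g:\Z_m^{d}\to\Z_m$ the map $\pi(x_{\ell+1},\dots,x_{r-1})+a_r x_r^{q_r}$ (which simply ignores the arguments on which it does not depend). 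Since $a_\ell$ is invertible and $\gcd(q_\ell,\varphi(m))=1$, Lemma~\ref{lem:permutive CA iff (n,phi(m)) coprime} tells us that $F$ is permutive in position $\ell$.

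Now I use that $\ell$ is the \emph{leftmost} non-zero position: the local rule $f$ depends on none of the variables $x_1,\dots,x_{\ell-1}$, since those have zero coefficient and $\pi$ involves only $x_{\ell+1},\dots,x_{r-1}$. Deleting these $\ell-1$ dummy arguments yields a local rule $g_0:\Z_m^{d-\ell+2}\to\Z_m$ and a cellular automaton $G$ with rule $g_0$ such that $F=G\circ\sigma^{\ell-1}$. Moreover $g_0$ is permutive in its first coordinate exactly because $f$ is permutive in position $\ell$, so $G$ is left-permutive and hence surjective by \cite[Proposition 5.22]{kurka2003topological}. Since $\sigma$ is a bijection of $\Z_m^{\Z}$, the composite $F=G\circ\sigma^{\ell-1}$ is surjective, which proves the claim; in the case $\gcd(q_r,\varphi(m))=1$ one argues identically, using that $f$ ignores $x_{r+1},\dots,x_{d+1}$ to realize $F$, up to a shift, as a right-permutive automaton.

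The subtle point — and the step I expect to be the real obstacle to get right — is the passage from ``permutive in position $\ell$'' to ``left-permutive''. Permutivity in a \emph{non-extremal} coordinate does \emph{not} imply surjectivity in general: for example $f(x_1,x_2,x_3)=x_2+x_1x_3$ over $\Z_3$ is permutive in position $2$, yet a direct preimage count for the two-letter word $00$ via Theorem~\ref{theo: hedlund surj} gives $\#(f^{*})^{-1}(00)\neq 3^{2}$, so the corresponding automaton is not surjective. Thus the hypothesis that $a_\ell$ and $a_r$ are the \emph{leftmost} and \emph{rightmost} non-zero coefficients is genuinely needed in order to pass to an extremal coordinate. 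If one prefers to avoid the shift bookkeeping, the reduction above can be replaced by a direct application of Theorem~\ref{theo: hedlund surj}: permutivity in the leftmost essential coordinate lets one solve, for any target word $u$ of length $n$, the $n$ equations $f^{*}(w)_i=u_i$ successively for the leftmost undetermined cell appearing in each, which shows $\#(f^{*})^{-1}(u)=m^{d}$ for every non-empty word $u$, and hence that $F$ is surjective.
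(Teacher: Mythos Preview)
Your argument is correct and is essentially the paper's own proof: both strip the dummy variables outside $\cc{\ell}{r}$ to pass from ``permutive in position $\ell$ (or $r$)'' to a genuinely left- (or right-) permutive auxiliary automaton $G$, invoke Lemma~\ref{lem:permutive CA iff (n,phi(m)) coprime} for the permutivity, and then transfer surjectivity from $G$ back to $F$. The paper phrases the transfer as ``by the definition of $f$, $F(x)=y$'' while you make the shift composition $F=G\circ\sigma^{\ell-1}$ explicit and add the instructive counterexample $x_2+x_1x_3$ showing why an interior permutive coordinate would not suffice; these are expository refinements, not a different strategy.
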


\begin{remark}
    As in the case of Lemma~\ref{lem:permutive CA iff (n,phi(m)) coprime}, if $m$ is a prime number it turns out that $F$ is surjective if  $\gcd(q_{\ell},m-1)=1$ or $\gcd(q_r,m-1)=1$.
\end{remark}

\begin{sloppypar}
\begin{proof}
 Suppose without loss of generality that $\gcd(q_r,\varphi(m))=1$ (in case $\gcd(q_\ell,\varphi(m))=1$ the proof is similar).
    Let $G$ be the CA with the local rule $g: \Z_{m}^{r-\ell +1} \to \Z_{m}$ defined as
    \[
    g(x_{\ell},\dots,x_r)=a_{\ell}x_{\ell}^{q_\ell} + \pi(x_{\ell +1},\dots,x_{r-1}) + a_{r}x_{r}^{q_r}.
    \]
    Since  $\gcd(q_r,\varphi(m))=1$, then by Lemma~\ref{lem:permutive CA iff (n,phi(m)) coprime}, $g$ is right-permutive, and thus $G$ is surjective.
    Therefore for every $y \in \Z_{m}^\Z$ there exists $x \in \Z_{m}^\Z$ such that $G(x)=y$.
  Finally, because of the definition of the local rule $f$ we can conclude that $F(x)=y$, meaning $F$ is surjective.   
\end{proof}

Notice that the converse implication does not hold in general in ring $\Z_m$. 

\begin{example}
\label{Ex sur non perm}

Let $F$ be the CA over $\Z_4$ with local rule $f$ given by: 
$$f(a,b,c)=a^2+b+c^2 \mod 4, \forall a,b,c\in \Z_4.$$
It is clear that $F$ could not be right (resp. left) permutive; however, $F$ is a surjective CA, as verified algorithmically since surjectivity is decidable for one-dimensional cellular automata.
\end{example}
\end{sloppypar}

We conclude this section conjecturing that a full characterization similar to the partial one proposed in Theorem \ref{thm:characterization surjective CA} holds for surjective CA over a finite field $\Z_p$.

\begin{conjecture}
\label{conj:characterization surjective CA Zp}
    Let $F$ be a $LR$-separated CA over the finite field $\Z_{p}$, for any prime integer $p \geq 3$, and let $\ell$ \resp{$r$} be the leftmost \resp{rightmost} position of $F$. 
     Then $F$ is surjective if and only if either $\gcd(q_{\ell},p-1)=1$ or $\gcd(q_r,p-1)=1$.
\end{conjecture}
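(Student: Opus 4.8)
\medskip\noindent\textbf{A strategy for Conjecture~\ref{conj:characterization surjective CA Zp}.}
The ``if'' part is exactly Theorem~\ref{thm:characterization surjective CA}, so only the ``only if'' part is at stake; writing $e_\ell=\gcd(q_\ell,p-1)$ and $e_r=\gcd(q_r,p-1)$, the goal is to show that $e_\ell\ge 2$ and $e_r\ge 2$ together force $F$ to be non-surjective. The first move is to dispose of the shift-like case $\ell=r$: there $f(x_1,\dots,x_{d+1})=a_\ell x_\ell^{q_\ell}$ with $a_\ell$ invertible, so $F$ is surjective iff $u\mapsto u^{q_\ell}$ permutes $\Z_p$, i.e. iff $\gcd(q_\ell,p-1)=1=\gcd(q_r,p-1)$. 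Assuming now $\ell<r$, I would pass, as in the proof of Theorem~\ref{thm:characterization surjective CA}, to the compressed CA $G$ with local rule $g(z_0,\dots,z_k)=a_\ell z_0^{q_\ell}+\pi(z_1,\dots,z_{k-1})+a_r z_k^{q_r}$, $k=r-\ell\ge1$, noting that $F$ is surjective iff $G$ is. When $k\ge 2$ a double application of \cite[Theorem~8.2.9]{mullen2013handbook} (to $g=[a_\ell z_0^{q_\ell}+\pi]+[a_r z_k^{q_r}]$, then to $a_\ell z_0^{q_\ell}+\pi$), using that neither $u\mapsto a_\ell u^{q_\ell}$ nor $u\mapsto a_r u^{q_r}$ is a permutation polynomial because $e_\ell,e_r\ge 2$, yields that $g$ is a permutation polynomial iff $\pi$ is; hence if $\pi$ is \emph{not} a permutation polynomial, then $g$ is not one either and the $n=1$ instance of Hedlund's criterion (Theorem~\ref{theo: hedlund surj}) already rules out surjectivity. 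Everything therefore reduces to the single assertion: \emph{if $k=1$, or if $k\ge 2$ and $\pi$ is a permutation polynomial, then $G$ is not surjective} --- in all these cases $g$ is balanced, so the $n=1$ criterion is useless and one must examine longer words.

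To do so I would use discrete Fourier analysis on $\Z_p$. Fix $n\ge 2$ and a primitive $p$-th root of unity $\omega$. By Hedlund's criterion together with the inversion formula on $\Z_p^n$, $G$ is surjective iff
\[
S(t):=\sum_{w\in\Z_p^{\,n+k}}\omega^{\,\sum_{i=0}^{n-1}t_i\,g^*(w)_i}=0\qquad\text{for all }t\in\Z_p^n\setminus\{0\}.
\]
The crucial structural observation is that the variable $w_0$ enters the exponent only through the term $t_0 a_\ell w_0^{q_\ell}$, and $w_{n+k-1}$ only through $t_{n-1}a_r w_{n+k-1}^{q_r}$; therefore $S(t)$ factors as $S(t)=\Sigma_L(t_0)\,\Sigma_R(t_{n-1})\,R(t)$, where $\Sigma_L(s)=\sum_{u\in\Z_p}\omega^{s a_\ell u^{q_\ell}}$, $\Sigma_R(s)=\sum_{u\in\Z_p}\omega^{s a_r u^{q_r}}$, and $R(t)$ is the character sum over the interior variables $w_1,\dots,w_{n+k-2}$. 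Because $e_\ell\ge 2$ the map $u\mapsto a_\ell u^{q_\ell}$ is not balanced, so $\Sigma_L$ --- being the Fourier transform of the non-constant vector of fiber sizes --- is not identically zero on $\Z_p\setminus\{0\}$; fix $s_\ell\neq 0$ with $\Sigma_L(s_\ell)\neq 0$ and, symmetrically, $s_r\neq 0$ with $\Sigma_R(s_r)\neq 0$. It then suffices to exhibit one $t$ with $t_0=s_\ell$, $t_{n-1}=s_r$ and $R(t)\neq 0$: that forces $S(t)\neq 0$, hence $G$ non-surjective.

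The step I expect to be the real obstacle is precisely this last one. If $\pi$ were linear, $R(t)$ would itself split into one-variable Gauss sums and further copies of $\Sigma_L,\Sigma_R$, and taking $t$ with all interior entries $0$ would make $R(t)\neq 0$ at once --- this is the flavour of Proposition~\ref{prop:surjective if PP or non-PP}. For an \emph{arbitrary} permutation polynomial $\pi$, however, $R(t)$ could a priori vanish for every admissible $t$, and I do not see how to exclude this in one stroke. My plan would be: (i) take $n$ large and $t$ supported on $\{0,n-1\}$, so that every purely interior occurrence of $\pi$ contributes a factor $\sum_w\omega^{s\,\pi(\cdots)}$ that vanishes by the permutation-polynomial property, leaving only the two ``seam'' sums where a boundary power map abuts $\pi$; and (ii) show those seam sums cannot be annihilated simultaneously with $\Sigma_L(s_\ell)$ and $\Sigma_R(s_r)$, by a second-moment argument over the boundary parameters exploiting that $\sum_s|\Sigma_L(s)|^2$ and $\sum_s|\Sigma_R(s)|^2$ strictly exceed their balanced values exactly when $e_\ell,e_r\ge 2$. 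A parallel line worth pursuing is to avoid Fourier analysis altogether and invoke the Garden-of-Eden theorem --- $G$ is non-surjective iff it is not pre-injective --- building two configurations that differ on a finite window yet share their $G$-image, starting from a collision $u^{q_\ell}=(u')^{q_\ell}$ with $u\neq u'$ (which exists since $e_\ell\ge 2$), absorbed through the $\pi$-block together with a matching collision for $q_r$; there the obstruction is to make the local modification globally consistent across overlapping windows, which is once again exactly where the arbitrariness of $\pi$ bites. In short, the skeleton --- shift-like case, reduction to $\pi$ a permutation polynomial, the Fourier factorization --- looks routine, whereas showing that a permutation polynomial sitting in the interior cannot repair the imbalance created by the two end power maps is the hard core, and presumably the reason the statement is only conjectured here.
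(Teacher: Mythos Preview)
This statement is a \emph{conjecture} in the paper, not a theorem: the authors do not provide a proof, and explicitly present it as an open problem motivated by Theorem~\ref{thm:characterization surjective CA} and Example~\ref{Ex sur non perm}. There is therefore no argument in the paper to compare your proposal against. You have correctly recognised this, labelling your contribution a ``strategy'' and isolating the unresolved part.

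Your reductions are sound: the shift-like case is immediate; the equivalence of surjectivity for $F$ and for the compressed automaton $G$ holds because $F$ and $G$ differ by a power of the shift; and the use of \cite[Theorem~8.2.9]{mullen2013handbook} to reduce (when $k\ge 2$) to the case where $\pi$ is a permutation polynomial is exactly Proposition~\ref{prop:surjective if PP or non-PP}(1). The Fourier factorisation $S(t)=\Sigma_L(t_0)\,\Sigma_R(t_{n-1})\,R(t)$ is also correct, since $w_0$ and $w_{n+k-1}$ each appear in exactly one local evaluation.

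One point in your plan~(i) deserves care: if you take $t$ supported on $\{0,n-1\}$ with $t_0\neq 0$ and $k\ge 2$, then the block $w_1,\dots,w_{k-1}$ enters the exponent only via $t_0\,\pi(w_1,\dots,w_{k-1})$, and since $\pi$ is a permutation polynomial the corresponding partial sum is \emph{exactly} $p^{k-2}\sum_{v\in\Z_p}\omega^{t_0 v}=0$. That makes the whole product $S(t)$ vanish --- which is the wrong sign for your purposes (you want $S(t)\neq 0$). So the support of $t$ cannot be that sparse; some interior $t_i$ must be switched on to prevent $\pi$ from being hit by a pure nontrivial character, and then the factorisation becomes genuinely entangled. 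This is consistent with your diagnosis that the interaction between $\pi$ and the boundary power maps is the ``hard core'', but it also shows that plan~(i) as stated will not work without modification. The second-moment and Garden-of-Eden ideas you mention are reasonable directions, but neither is carried far enough here to constitute progress beyond what the paper already records as open.
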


\section{Reversibility}
\label{sec:reversibility}

This section is devoted to the study of reversibility of $LR$-separated CA.

\begin{lemma}
\label{cor:permutive implies not injective}
Let $F$ be a $LR$-separated CA over $\Z_{m}$, for any integer $m \geq 3$, and let $\ell$ \resp{$r$} be the leftmost \resp{rightmost} position of $F$ with $\ell<r$.\\ 
If $F$ is $\ell$-permutive and $r$-permutive, then $F$ is not injective.
\end{lemma}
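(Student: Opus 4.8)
The plan is to construct, for a given configuration $y$, at least two distinct preimages under $F$, thereby violating injectivity. The key structural fact is that an $LR$-separated CA with $\ell < r$ that is both $\ell$-permutive and $r$-permutive behaves, with respect to the extreme coordinates, like a bipermutive CA: knowing the image and the "interior" of a window, the leftmost coordinate is determined by the rightmost one and vice versa, but nothing pins down an absolute value. First I would set up the bookkeeping: after a harmless shift, assume the local rule reads $f(x_1,\dots,x_{d+1}) = a_\ell x_\ell^{q_\ell} + \pi(x_{\ell+1},\dots,x_{r-1}) + a_r x_r^{q_r}$ with $a_\ell, a_r$ invertible and, by Lemma~\ref{lem:permutive CA iff (n,phi(m)) coprime}, $\gcd(q_\ell,\varphi(m)) = \gcd(q_r,\varphi(m)) = 1$, so that the maps $t \mapsto a_\ell t^{q_\ell}$ and $t \mapsto a_r t^{q_r}$ are bijections on $\Z_m$.

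The core of the argument is a "diagonal propagation" construction. Fix the zero configuration $y = 0^\infty$ (or any fixed $y$). I would build a preimage $x \ne \bar 0$ where $\bar 0$ denotes some chosen preimage of $y$, typically $\bar 0 = 0^\infty$ if $f(0,\dots,0)=0$, otherwise adapt. Pick a single position, say $0$, and set $x_i = 0$ for all $i < 0$. Now determine the remaining coordinates recursively: the equation $F(x)_{i} = y_{i}$, i.e. $a_\ell x_{i+\ell}^{q_\ell} + \pi(x_{i+\ell+1},\dots,x_{i+r-1}) + a_r x_{i+r}^{q_r} = y_i$, lets me solve for $x_{i+r}$ once $x_{i+\ell},\dots,x_{i+r-1}$ are known (using that $t \mapsto a_r t^{q_r}$ is a bijection). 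Starting from the first window that "sees" a free coordinate, I choose that coordinate to differ from the corresponding coordinate of $\bar 0$, and then let the right-permutivity propagate a uniquely-determined tail to the right; to the left of the perturbation everything agrees with $\bar 0$. This produces a configuration $x$ with $F(x) = y$ and $x \ne \bar 0$. The symmetric construction using $\ell$-permutivity (propagating leftward) would also work; only one is needed.

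The step I expect to be the main obstacle is making the recursion genuinely well-defined on a \emph{bi-infinite} configuration while guaranteeing $x \ne \bar 0$: one must be careful that fixing $x_i = 0$ on a left half-line is consistent with all the constraints $F(x)_i = y_i$ for $i$ very negative — this uses $\ell < r$ so that each such window is entirely within the zero region and $f(0,\dots,0) = y_{-\infty}$, which holds because $\bar 0$ was chosen as a preimage. A cleaner alternative avoiding half-line subtleties: exhibit two distinct \emph{spatially periodic} preimages. Take a period $N$ large compared to $d$, and count — by the same right-permutivity solving step, the number of period-$N$ preimages of a period-$N$ configuration $y$ is exactly $m^{\,r-\ell}$ (the interior coordinates $x_{\ell+1},\dots,x_{r}$ over one period, minus the wrap-around constraints, leave $r - \ell \ge 1$ free choices), which is at least $m \ge 3 > 1$. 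Hence $F$ restricted to period-$N$ configurations is non-injective, so $F$ is non-injective. I would present this periodic-counting version as the main line, as it sidesteps convergence issues and directly yields the contradiction with injectivity.
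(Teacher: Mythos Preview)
Your overall plan---exhibit two distinct preimages of a single configuration---is the paper's plan too, but both concrete constructions you propose have the same genuine gap: you try to get by with only one of the two permutivities, when both are essential.

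In your first approach you freeze $x_i=0$ for all $i<0$ and then propagate rightward via $r$-permutivity. But once the whole left half-line is fixed, the equation $F(x)_i=y_i$ whose rightmost active variable is $x_0$ already determines $x_0$ uniquely from $x_{-(r-\ell)},\dots,x_{-1}$ (all zero) and $y_i$; there is no ``free coordinate'' left to perturb. Inductively $x_j=\bar 0_j$ for every $j\ge 0$, so the construction just reproduces $\bar 0$. Your parenthetical ``only one [direction] is needed'' is exactly what fails: a one-sidedly permutive CA can perfectly well be injective (e.g.\ the shift).

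In your periodic-counting approach the assertion that every period-$N$ configuration has exactly $m^{r-\ell}$ period-$N$ preimages is false. For the linear bipermutive rule $f(a,b)=a+b$ over $\Z_3$ (so $r-\ell=1$) each constant configuration $c^\infty$ has a \emph{unique} period-$1$ preimage, not $3$: the wrap-around constraints you allude to do not cancel for free.

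The fix---and this is precisely the paper's proof---is to choose a finite \emph{seed} of length $r-\ell+1$ satisfying the single constraint whose window lies entirely inside it, and then propagate \emph{rightward} with $r$-permutivity and \emph{leftward} with $\ell$-permutivity. Both directions are forced and consistent, so each seed extends to a unique bi-infinite preimage of the chosen target (the paper takes $y=a^\infty$ with $a=\pi(0,\dots,0)$), and there are $m^{r-\ell}\ge m\ge 3$ admissible seeds, hence at least two distinct preimages. The paper exhibits two explicitly: the all-zero seed and $b_\ell\,0\cdots 0\,c_r$ with $a_\ell b_\ell^{q_\ell}=1$, $a_r c_r^{q_r}=-1$. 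Your counting intuition is correct once stated for $A^\Z$ rather than for a fixed period: a bipermutive CA with effective window width $r-\ell+1$ is exactly $m^{r-\ell}$-to-one on $A^\Z$, and the bidirectional propagation from a seed is how one proves it.
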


\begin{remark}
    The authors are aware that the validity of the result can be deduced as an implication of the fact that bipermutive CA are positively expansive, and therefore cannot be injective.
    Nevertheless, we decide to include an alternative proof, as it offers a constructive approach that highlights the dynamics involved.
\end{remark}
     
\begin{proof}
Let $F$ be a $LR$-separated CA with diameter $d$ and local rule $f$ over $\Z_m$. Following the idea of Remark \ref{re: def sep}, $f$ can be written as follows:
$$f(x_{1},\dots,x_{d+1})=a_{\ell}x_{\ell}^{q_\ell} + \pi(x_{\ell +1},\dots,x_{r-1}) + a_{r}x_{r}^{q_r}.$$
Since $F$ is $\ell$-permutive and $r$-permutive, the maps $g(x)=a_{\ell}x^{q_{\ell}}$ and  $h(x)=a_r x^{q_r}$ are invertible. 
Hence, there is $b_{\ell}, c_{r}\in\Z_{m}\setminus\{0\}$ such that $g(b_\ell)\equiv 1 \mod m$ and $h(c_{r})\equiv -1 \mod m$.
Let us denote $a=\pi(0,\cdots, 0)$. Since $g$ and $h$ are invertible, one can construct two configurations $y\neq y'$ such that: 
\begin{itemize}
    \item[$\bullet$] $y_{\cc{\ell}{r}}=b_\ell 0\cdots0 c_r$ and $y'_{\cc{\ell}{r}}=00\cdots00.$
    \item[$\bullet$] For all $i>r$: the letter $y_i$ \resp{$y'_i$} is the solution of the equation $h(x)+\pi(y_{i-r+\ell+1},\cdots,y_{i-1})+g(y_{i-r+\ell})=a$, \resp{$h(x)+\pi(y'_{i-r+\ell+1},\cdots,y'_{i-1})+g(y'_{i-r+\ell})=a$}.
    \item[$\bullet$] For all $i<\ell$: the letter $y_i$ \resp{$y'_i$} is the solution of the equation $g(x)+\pi(y_{i+1},\cdots,y_{i+r-\ell-1})+h(y_{i+r-\ell})=a$, \resp{$g(x)+\pi(y'_{i+1},\cdots,y'_{i+r-\ell-1})+h(y'_{i+r-\ell})=a$}.
\end{itemize}
By construction, $F(y)=F(y')=a^\infty$. Hence, $F$ is not injective since $y\neq y'$. \qed  
\end{proof}


\begin{theorem}
\label{thm:characterization injective/bijective}
    Let $F$ be a $LR$-separated CA with diameter $d=2\rho$ and local rule $f$ over $\Z_{m}$, for any integer $m \geq 3$, and let $\ell$ \resp{$r$} be the leftmost \resp{rightmost} position of $F$. 
    Then $F$ is injective if and only if $\ell = r$ and $\gcd(q_{\ell},\varphi(m))=1$.
\end{theorem}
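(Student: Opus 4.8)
The plan is to prove the two directions separately, using the $LR$-separated structure recorded in Remark~\ref{re: def sep} together with the permutivity criterion of Lemma~\ref{lem:permutive CA iff (n,phi(m)) coprime}.

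First, the easy direction. Suppose $\ell = r$ and $\gcd(q_\ell,\varphi(m)) = 1$. Then $F$ is shift-like, with local rule $f(x_1,\dots,x_{d+1}) = a_\ell x_\ell^{q_\ell}$ where $a_\ell \in \Z_m^*$. By Lemma~\ref{lem:permutive CA iff (n,phi(m)) coprime}, the map $x \mapsto a_\ell x^{q_\ell}$ is a bijection of $\Z_m$; call its inverse $\psi$. Then $G$ defined by the local rule $g(x_1,\dots,x_{d+1}) = \psi(x_{d+2-\ell})$ (a shift composed with $\psi$, appropriately centered) satisfies $F \circ G = G \circ F = \id$, so $F$ is reversible, hence injective. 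I would just exhibit this inverse explicitly rather than invoke any abstract result.

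For the converse, assume $F$ is injective. I would first rule out $\ell < r$: if $\ell < r$, then by Lemma~\ref{lem:permutive CA iff (n,phi(m)) coprime} applied in positions $\ell$ and $r$, injectivity forces $F$ to be \emph{not} both $\ell$-permutive and $r$-permutive — because Lemma~\ref{cor:permutive implies not injective} says a bipermutive $LR$-separated CA with $\ell<r$ is not injective. So at least one of $\gcd(q_\ell,\varphi(m))$, $\gcd(q_r,\varphi(m))$ exceeds $1$; WLOG say $\gcd(q_r,\varphi(m)) > 1$, so $x \mapsto a_r x^{q_r}$ is not a permutation of $\Z_m$. The idea is to build two distinct configurations with the same image: pick $u \ne u'$ in $\Z_m$ with $a_r u^{q_r} = a_r (u')^{q_r}$ (equivalently $u^{q_r} \equiv (u')^{q_r}$), place them at position $r$ (and fill positions $\ell+1,\dots,r-1$ with a fixed block, say zeros, in both configurations), then propagate to the right: for $i > r$, choose $y_i$ (resp.\ $y'_i$) to solve the "rightmost" equation $a_r x^{q_r} + \pi(\cdots) + a_\ell(\cdots)^{q_\ell} = c$ for a suitable constant $c$ — but here the subtlety is that solvability to the right needs the map in the \emph{other} separated variable (position $\ell$) to be invertible, and similarly to the left one needs position $r$ invertible. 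Since exactly one of the two boundary maps may fail to be a permutation, I would choose the non-permutation at whichever end admits a preimage-collision and propagate in the direction governed by the end that \emph{is} invertible; if neither is invertible, one still gets non-injectivity directly from a collision at a single position plus the $\ell$-invertible... — actually the cleanest route is: whichever of $q_\ell, q_r$ has $\gcd > 1$, the corresponding CA restricted to a finite window already fails to be injective on finite patterns, and by Theorem~\ref{theo: hedlund surj}-style counting (or directly) this lifts to non-injectivity of $F$. This mirrors the construction in the proof of Lemma~\ref{cor:permutive implies not injective}, just with a collision replacing the bipermutive splitting.

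The main obstacle is the converse in the case $\ell < r$: one must handle all the subcases for which of $q_\ell, q_r$ (possibly both) fail the coprimality condition, and make the bi-infinite propagation argument rigorous — in particular ensuring the two configurations genuinely differ and that the propagation equations are always solvable (which is exactly where the invertibility of the opposite boundary map, or a careful choice of the fixed interior block, is used). Once $\ell < r$ is excluded, the case $\ell = r$ with $\gcd(q_\ell,\varphi(m)) > 1$ is immediate: $x \mapsto a_\ell x^{q_\ell}$ is not injective on $\Z_m$, so two constant configurations collide under $F$. Combining, injectivity forces $\ell = r$ and $\gcd(q_\ell,\varphi(m)) = 1$, completing the characterization.
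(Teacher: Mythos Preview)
Your overall plan matches the paper's: handle $(\Leftarrow)$ via the shift-like bijection, and for $(\Rightarrow)$ with $\ell<r$ invoke Lemma~\ref{cor:permutive implies not injective} in the bipermutive subcase, then treat the subcase where (WLOG) $\gcd(q_r,\varphi(m))>1$; you also correctly cover $\ell=r$ with $\gcd(q_\ell,\varphi(m))>1$, which the paper leaves implicit. The gap is in the remaining subcase. You try to mimic the two-sided propagation of Lemma~\ref{cor:permutive implies not injective}, placing a collision $u\ne u'$ with $a_ru^{q_r}=a_r(u')^{q_r}$ at one cell and extending. But that construction needs permutivity on \emph{both} ends: extending to the right solves for the $r$-variable, extending to the left solves for the $\ell$-variable. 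With only one end permutive you can propagate in only one direction, and setting the two configurations equal on the other half-line does not help --- the modified cell reappears as the \emph{permutive}-end variable in the window $r-\ell$ steps over, forcing a discrepancy there rather than an equality. When neither end is permutive you have no propagation direction at all, and your fallback to ``Theorem~\ref{theo: hedlund surj}-style counting'' is off target: Hedlund's preimage count characterizes surjectivity, not injectivity, and gives no mechanism to lift a finite-window collision to a bi-infinite one.

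The paper sidesteps this by abandoning explicit propagation in this subcase in favor of compactness: for a fixed $y$ it sets
\[
X_n=\Bigl\{x\in\Z_m^\Z:\ x_{\cc{-\rho}{\rho}}\ne y_{\cc{-\rho}{\rho}}\ \text{and}\ f^*\bigl(x_{\cc{-n-\rho}{n+\rho}}\bigr)=f^*\bigl(y_{\cc{-n-\rho}{n+\rho}}\bigr)\Bigr\},
\]
argues each $X_n$ is a nonempty closed set (nonemptiness coming from the failure of $r$-permutivity), notes $X_{n+1}\subseteq X_n$, and extracts $x\in\bigcap_n X_n$ by compactness of the Cantor space, yielding $x\ne y$ with $F(x)=F(y)$. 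This reduces the bi-infinite statement to a uniform finite one and is the missing device in your plan.
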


\begin{remark}
As in the case of Proposition~\ref{thm:characterization surjective CA}, if $m$ is a prime number, then, $F$ is injective if and only if $\ell = r$ and $\gcd(q_{\ell},m-1)=1$.
\end{remark}

\begin{proof}
    We will prove the two implications separately.
    \begin{itemize}
        \item[$\Leftarrow$] If $\ell = r$, then $f$ is the monomial $f(x_1,\dots,x_{d+1})=a_{\ell}x_\ell^{q_\ell}$, and being $\gcd(q_{\ell},\varphi(m))=1$ the injectivity of $F$ follows directly.
        \item[$\Rightarrow$] Suppose $F$ is injective, and assume by contradiction that $\ell < r$.
        If both $\gcd(q_r,\varphi(m))=1$ and $\gcd(q_\ell,\varphi(m))=1$ then $F$ cannot be injective thanks to Lemma~\ref{cor:permutive implies not injective}, so say $\gcd(q_r,\varphi(m))\neq 1$.
        Let $y \in \Z_m^{\Z}$ and set for all $n \in \N$, for $k=n+\rho$:
        \[
        X_n = \sett{x \in \Z_m^\Z}{x_{\cc{-{\rho}}{{\rho}}}\neq y_{\cc{-{\rho}}{{\rho}}} \text{ and } f^*(x_{\cc{-k}{k}})=f^*(y_{\cc{-k}{k}})}.
        \]
        Note that for every $n \in \mathbb{N}$, $X_n$ is non-empty since $\gcd(q_r,\varphi(m))\neq 1$.\\
        Moreover, for all $n \in \mathbb{N}$ we can rewrite $X_n$ as the intersection of two closed sets: the set of finite words of length $2{\rho}+1$ different from $y_{\cc{-{\rho}}{{\rho}}}$, and the set of configurations $x$ such that $f^*(x_{\cc{-n-\rho}{n+\rho}})=f^*(y_{\cc{-n-\rho}{n+\rho}})$. That is, for all $n\in\N$, for $k=n+\rho$:
        \[
        X_n= \left(\bigcup_{u \in \Z_m^{2{\rho}+1} \setminus \{y_{\cc{-{\rho}}{{\rho}}}\}} [u] \right) \bigcap \sett{x \in \Z_m^\Z }{ f^*(x_{\cc{-k}{k}})=f^*(y_{\cc{-k}{k}}) }.
        \]
        Therefore, it holds for every $n \in \mathbb{N}$, that $X_n$ is a non-empty closed set.
        In addition, since {for all $n$} clearly $X_{n+1}\subseteq X_n$, then by the compactness of the Cantor space, there exists $x \in \bigcap_{n \in\N} X_n$.
        Hence, there exists $x \neq y$ such that $F(x)=F(y)$, contradicting the hypothesis that $F$ is injective. \qed
    \end{itemize}
\end{proof}

\begin{corollary}
    Let $F$ be a $LR$-separated CA over $\Z_{m}$, where $m$ is an integer with $m \geq 3$. 
    Then $F$ is bijective if and only if $\ell=r$ and $\gcd(q_{\ell},\rho(m))=1$.
\end{corollary}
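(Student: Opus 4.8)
The plan is to deduce this corollary directly from Theorem~\ref{thm:characterization injective/bijective} together with the standard facts recalled in Section~\ref{sec:background}, so that essentially no new work is required. First I would note that for a $LR$-separated CA $F$ over $\Z_m$, bijectivity is equivalent to injectivity: indeed, Section~\ref{sec:background} records (following~\cite[Corollary 5.27]{kurka2003topological}) that every injective CA is surjective, hence a CA is bijective precisely when it is injective. Thus the statement ``$F$ is bijective if and only if $\ell=r$ and $\gcd(q_\ell,\varphi(m))=1$'' is nothing but a restatement of Theorem~\ref{thm:characterization injective/bijective}, where I would read the ``$\rho(m)$'' appearing in the corollary as a typo for the Euler totient $\varphi(m)$ (the parameter $\rho$ elsewhere denotes the radius, so this must be intended as $\varphi$).

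Concretely, the proof is two short implications. For the forward direction, assume $F$ is bijective; then in particular $F$ is injective, so Theorem~\ref{thm:characterization injective/bijective} gives $\ell=r$ and $\gcd(q_\ell,\varphi(m))=1$. For the reverse direction, assume $\ell=r$ and $\gcd(q_\ell,\varphi(m))=1$; then by the ``$\Leftarrow$'' part of Theorem~\ref{thm:characterization injective/bijective} the local rule is the monomial $f(x_1,\dots,x_{d+1})=a_\ell x_\ell^{q_\ell}$ and $F$ is injective, and since every injective CA is surjective, $F$ is bijective. I would also remark, for completeness, that in this shift-like case one can exhibit the inverse explicitly: choosing $q'$ with $q_\ell q' \equiv 1 \pmod{\varphi(m)}$ and the inverse $a_\ell^{-1}$ in $\Z_m^*$, the map with local rule $x \mapsto (a_\ell^{-1} x)^{q'}$, composed with a suitable shift by $\ell-\rho-1$, inverts $F$ — this is consistent with the ``reversible shift-like'' terminology of the introduction, but it is not needed for the proof.

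There is essentially no obstacle here: the only subtlety is notational, namely confirming that $\rho(m)$ is meant to be $\varphi(m)$, and recalling the equivalence bijective $\iff$ injective for CA, both of which are already established in the excerpt. I would write the proof as follows.

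\begin{proof}
Recall from Section~\ref{sec:background} that every injective CA is surjective (see~\cite[Corollary 5.27]{kurka2003topological}), and hence a CA is bijective if and only if it is injective. Applying this to the $LR$-separated CA $F$, bijectivity of $F$ is equivalent to injectivity of $F$, which by Theorem~\ref{thm:characterization injective/bijective} holds if and only if $\ell=r$ and $\gcd(q_\ell,\varphi(m))=1$. This proves the claim. \qed
\end{proof}
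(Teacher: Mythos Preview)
Your proposal is correct and matches the paper's intent: the corollary is stated immediately after Theorem~\ref{thm:characterization injective/bijective} with no proof, so it is meant to follow directly from that theorem together with the fact (recalled in Section~\ref{sec:background}) that injective CA are automatically surjective. Your reading of $\rho(m)$ as a typo for $\varphi(m)$ is also correct.
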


\begin{example}
Let $F$ be a CA with local rule: 
$f(a,b,c)=a^4+3b \mod 7.$
The global rule $F$ is not injective since $F((56)^\infty)=F((43)^\infty)=(62)^\infty.$
However, $P(x)=x^4+3x \mod 7$, is a permutation polynomial over $Z_7$.
\end{example}

\begin{example}
Let $F$ be a CA with local rule: 
$f(a,b,c)=a^3+2b+c^2 \mod 5.$
The global rule $F$ is not injective since $F((10)^\infty)=F((3)^\infty)=2^\infty$. We can take also $F((30)^\infty)=F((41)^\infty)=(34)^\infty$.
However, $P(x)=x^3+2x+x^2 \mod 5$, is a permutation polynomial over $Z_5$ (even it is the sum of two non permutation polynomials $P_1(x)=x^3+2x \mod 5$ and $P_2(x)=x^2 \mod 5$).
\end{example}

\section{Conclusions and Future Directions}

In this work, we analyzed the structural properties of non-linear CA, focusing on permutivity, surjectivity, and reversibility. We introduced the class of $j$-separated non-linear CA and established algebraic characterizations of the above mentioned properties for this class of CA.

Our findings show that permutivity plays a central role in determining surjectivity and reversibility. Specifically, we showed that a $j$-separated non-linear CA is surjective if and only if it is either left- or right-permutive. Additionally, we proved that reversibility is equivalent to the CA being surjective with the local rule $f$ depending only on one variable. These results contribute to a deeper understanding of non-linear CA dynamics and provide a framework for identifying their computational potential.

Beyond theoretical results, we presented illustrative examples to clarify the interplay between permutivity, surjectivity, and reversibility.
\\
Although our analysis (Theorem~\ref{thm:characterization injective/bijective} in particular) shows that the class of $j$-separated non-linear CA is not especially suited for cryptographic applications, it is the authors' belief that future investigation of broader classes of non-linear CA should lead to a complete exploitation of the non-linear rules computational complexity to produce secure cryptographic primitives.

We conclude by proposing some questions, related to the above discussion, that we find particularly interesting and worth exploring:
\begin{enumerate}
    \item In the case of finite rings it holds that every non-linear function is a polynomial, thus restricting significantly the possible non-linearity structures. We noticed in Remark~\ref{rem:function over finite ring not necessarily poly} that this does not hold, for example, for finite rings: what happens in the case of a general alphabet $A$?
    \item In this work we focus on uniform CA, meaning all local interactions are determined by the same rule. 
    How do our results transform in the case of non-uniform CA (i.e. a CA allowing different local rules)?
\end{enumerate}

\end{sloppypar}

\bibliographystyle{alpha}
\bibliography{bibliography}

\end{document}